\newcommand{\unbig}[1]{#1}
\newcommand{\CPM}{\ensuremath{\textbf{CPM}}\xspace}
\newcommand{\Rel}{\ensuremath{\textbf{Rel}}\xspace}
\newcommand{\MatRp}{\ensuremath{\textbf{Mat}(\mathbb R_+)}\xspace}
\newcommand{\SOCt}{\ensuremath{\textrm{SOC}_2}\xspace}
\newcommand{\SOCn}{\ensuremath{\textrm{SOC}_n}\xspace}
\newcommand{\past}{\ensuremath{\textbf{past}}\xspace}
\newcommand{\indep}{\mathop{\raisebox{-1.5pt}{\rotatebox{90}{$\models$}}}}
\newcommand{\evt}[1]{\ensuremath{\textsf{\footnotesize{#1}}}}
\title{A categorical semantics for causal structure}
\author[A.~Kissinger]{Aleks Kissinger}  
\address{iCIS, Radboud University} 
\email{aleks@cs.ru.nl}  
\author[S.~Uijlen]{Sander Uijlen}  
\address{University of Oxford} 
\email{suijlen@cs.ru.nl}
\begin{document}
\girardnotation

\maketitle

\begin{abstract}
  We present a categorical construction for modelling causal structures within a general class of process theories that include the theory of classical probabilistic processes as well as quantum theory. Unlike prior constructions within categorical quantum mechanics, the objects of this theory encode fine-grained causal relationships between subsystems and give a new method for expressing and deriving consequences for a broad class of causal structures. We show that this framework enables one to define families of processes which are consistent with arbitrary acyclic causal orderings. In particular, one can define one-way signalling (a.k.a. semi-causal) processes, non-signalling processes, and quantum $n$-combs. Furthermore, our framework is general enough to accommodate recently-proposed generalisations of classical and quantum theory where processes only need to have a fixed causal ordering locally, but globally allow indefinite causal ordering.
  To illustrate this point, we show that certain processes of this kind, such as the quantum switch, the process matrices of Oreshkov, Costa, and Brukner, and a classical three-party example due to Baumeler, Feix, and Wolf are all instances of a certain family of processes we refer to as $\SOCn$ in the appropriate category of higher-order causal processes. After defining these families of causal structures within our framework, we give derivations of their operational behaviour using simple, diagrammatic axioms.
\end{abstract}


\tableofcontents

\section{Introduction}

Broadly, causal structures identify which events or processes taking place across space and time can, in principle, serve as causes or effects of one another. In the field of statistical causal inference, directed acyclic graphs (or generalisations thereof) have been used to capture that fact that certain random variables may have causal influences on others. That is, forcefully changing a certain variable (e.g., whether a patient takes a drug or placebo in randomised trial) may have an effect on another variable (e.g., whether the patient recovers). Intuitively, it may seem impossible to draw such causal conclusions without making such an intervention, however it has been shown in recent years that, under certain assumptions, one can draw causal conclusions from purely observational data~\cite{PearlBook}. For instance, the \textit{constraint-based approach} to causal discovery uses conditional independences present in statistical data to iteratively rule out possible causal relationships by removing edges from a (typically very large) directed graph. This sits at the heart of modern, graph-based causal discovery algorithms such as FCI~\cite{SpirtesGlymour}, and has been successful in a wide variety of real-world applications, including the study of ADHD~\cite{CausalADHD}, neural connectivity~\cite{CausalBrain}, gene regulation networks~\cite{Genes}, and climate change~\cite{Climate}.







In recent years, it has been asked whether, and to what extent, concepts from classical causal reasoning can be transferred to quantum theory, or more generalised theories of interacting processes. A major obstacle to employing classical techniques comes from the presence of \textit{quantum non-locality}, i.e., the possibility of correlations observed between distant, non-communicating agents which cannot be explained by a classical common cause. Indeed, Wood and Spekkens showed that quantum theory allows correlations which admit no faithful classical causal model~\cite{WoodSpekkens}. Roughly speaking, this means that any attempt to na\"ively reproduce quantum correlations with a classical causal model will necessarily include spurious causal relationships. Hence, there have been numerous attempts to extend classical causal models to quantum~\cite{PienaarBrukner,CostaShrapnel,AllenCommonCause} or even more general~\cite{HLP} models.


In the context of quantum theory, causal relationships between inputs and outputs to quantum processes have been expressed in a variety of ways. Perhaps the simplest are in the form of non-signalling constraints, which guarantee that distant agents are not capable of sending information faster that the speed of light, e.g., to affect each other's measurement outcomes \cite{BGNP}. Quantum strategies \cite{QGames} and more recently quantum combs \cite{QuantCircArch} offer a means of expressing more intricate causal relationships, in the form of chains of causally ordered inputs and outputs.  Furthermore, it has been shown recently that one can formulate a theory that is locally consistent with quantum theory yet assumes no fixed background causal structure \cite{ViennaIndef}. Interestingly, such a theory admits \textit{indefinite} causal structures. Namely, it allows one to express processes which inhabit a quantum superposition of causal orders.

Such processes are especially interesting for two reasons. First, indefinite causal structure seems to be an inevitable property of any theory that attempts to combine causal dynamics of general relativity with the irreducible non-determinism of quantum theory~\cite{HardyCausaloid}. Hence, indefinite causal structures can provide interesting `toy models' that exhibit quantum gravity-like features. Second, if physically realisable, processes with indefinite causal ordering can be exploited to gain an advantage in comminication problems, such as the `guess your neighbour's input' game~\cite{ViennaIndef} and computational tasks, such as single-shot quantum channel discrimination \cite{QSwitch}. Perhaps surprisingly, this phenomenon is not unique to quantum processes: it has been shown recently that, in the presence of three or more parties, causal bounds can be violated even within a theory that behaves locally like classical probability theory \cite{WolfClassicalMulti}. The key ingredient in studying (and varying) causal structures of processes is the development of a coherent theory of \textit{higher order causal processes}. That is, if we treat local agents (or events, laboratories, etc.) as first order causal processes, then the act of composing these processes in a particular causal order should be treated as a second-order process.

This paper aims to represent causal relationships within `black box' processes (which could be classical, quantum, or more general) in a uniform way, and provide some of the first tools for `generalised causal reasoning' which can be applied in both the classical and quantum contexts. To do this, we start from a category $\mathcal C$ of `raw materials'---whose morphisms should be thought of as processes without any causal constraints---and construct a new category $\Caus[\mathcal C]$ of first and higher-order processes which are consistent with certain given causal constraints.

For $\mathcal C$, we introduce a new kind of category called a \textit{precausal category}, which is a compact closed category with a bit of extra structure enabling us to reason about causal relationships between systems.
Most notably, precausal categories have a special \textit{discarding process} defined on each object, which enables one to say when a process is \textit{causal}, namely when it satisfies the following equation:
\[ \tikzfig{causal} \]
This has a clear operational intuition: 
\begin{center}
\medskip

\textit{If the output of a causal process is discarded, it doesn't matter which process occurred}.

\medskip
\end{center}
...or put a slightly different way:
\begin{center}
\medskip

\textit{The only influence a causal process has is on its output}.

\medskip
\end{center}
In particular, causality rules out the possibility of hidden side effects or the `backwards flow' of information from a given process.
While seemingly obvious, the causality condition, originally given by \cite{Chiri2} in the context of operational probabilistic theories, is surprisingly powerful. For instance, it is equivalent to the non-signalling property for joint processes arising from shared correlations \cite{Coecke2016}.

It is thus natural to consider the subcategory of $\mathcal C$ consisting of all causal processes. However, we take this a step further and consider not only (first-order) causal processes, but also higher-order mappings which preserve certain kinds of causal processes. This yields a $*$-autonomous category \CausC, into which the category of all first-order causal processes embeds fully and faithfully. We therefore call categories of the form $\Caus[\mathcal C]$ \textit{higher-order causal categories} (HOCCs). Our main examples start from the precausal categories of matrices of positive real numbers and completely positive maps, which will yield HOCCs of higher-order classical stochastic processes and higher-order quantum channels, respectively.

While categorical quantum mechanics \cite{AC1} has typically focussed on compact closed categories of quantum processes, we show that this $*$-autonomous structure yields a much richer type system for describing causal relationships between systems. Whereas compact closed categories have only one way to form joint systems, namely $\otimes$, $*$-autonomous categories have two: $\otimes$ and $\parr$. A simple, yet striking example of the difference in these two connectives in $\Caus[\mathcal C]$ comes from forming types of processes on a joint system:
\begin{align*}
  (\bm A \limp \bm A') \otimes (\bm B \limp \bm B') & \ \ \leftarrow\ \textrm{non-signalling processes} \\
  (\bm A \limp \bm A') \parr (\bm B \limp \bm B') & \ \ \leftarrow\ \textrm{all processes}
\end{align*}
Using the richer type system afforded by $*$-autonomous categories, we are able to give logical characterisations of many kinds of first- and higher-order causality constraints, and show, for example, when certain constraints imply others.

The paper is structured as follows. In section~\ref{sec:preliminaries}, we outline the basics of compact closed categories, string diagrams, $*$-autonomous categories, discarding, and (non-)signalling processes. In section~\ref{sec:precausal}, we introduce precausal categories and prove some basic properties, such as the \textit{no time-travel} theorem for precausal categories. In section~\ref{sec:causC}, we construct the higher-order causal category $\CausC$ and show that it is $*$-autonomous. In section~\ref{sec:first-order}, we develop various properties of first-order systems, most notably the coincidence of $\otimes$ and $\parr$. In section~\ref{sec:higher-order}, we demonstrate the aforementioned dichotomy of $\otimes$ and $\parr$ at the level of second-order sytems: namely that $\otimes$ can be used to construct a type of non-signalling processes whereas $\parr$ yields all processes. Furthermore, we show that one-way signalling processes can be captured in a HOCC using a generalisation of quantum combs, and processes satisfying arbitrary acyclic causal structures can be captured using pullbacks of combs. In section~\ref{sec:indef}, we show that HOCCs also enable us to capture bipartite and $n$-partite second-order causal processes. We give several examples which are known to exhibit indefinite causal structure, namely the OCB process from \cite{ViennaIndef}, the classical tripartite process from \cite{WolfClassicalMulti}, and an abstract version of the quantum switch defined in \cite{QSwitch}. Finally, we prove using just the structure of \CausC that the switch does not admit a causal ordering by reducing to no time-travel.

\textbf{Related work.} This work was inspired by \cite{PauloHierarchy}, which aims for a uniform description of higher-order \textit{quantum} operations in terms of generalised Choi operators. However, rather than relying on the linear structure of spaces of operators, we work purely in terms of the $*$-autonomous structure and the precausal axioms, which concern the compositional behaviour of discarding processes. The construction of \CausC is a variant of the \textit{double gluing construction} used in \cite{HylandGlue} to construct models of linear logic. In the language of that paper, our construction consists of building the `tight orthogonality category' induced by a focussed orthogonality on $\{1_I\}\subseteq \mathcal C(I,I)$, and then restricting to objects satisfying the flatness condition in Definition \ref{def:flat-and-closed}. 
Since it is $*$-autonomous, \CausC indeed gives a model of multiplicative linear logic, enabling us to enlist the aid of linear-logic based tools for proving theorems about causal types. We comment briefly on this in the conclusion. When fixing $\mathcal C = \MatRp$, the category $\Caus[\MatRp]$ is closely related to (the finite-dimensional subcategory of) probabilistic coherence spaces, introduced in~\cite{Quantic} and refined in~\cite{DanosCoherence}. The main difference, aside from allowing matrices over infinite sets of indices, is that the latter defines duals with respect to $[0,1] \subseteq \mathcal \MatRp(I,I) \cong \mathbb R_+$ rather than $\{1\}$, which allows one to capture sub-normalised probability distributions as well as normalised ones. 
Our construction indeed extends to incorporate sub-normalised processes, but understanding the structure of the resulting, larger category and its relationship to the one we define is a subject of future work. It is worth noting however, that the correspondence with coherence spaces does not extend to the quantum case. In particular, $\Caus[\CPM]$ is better behaved than the quantum coherence spaces defined by Gerard in~\cite{Quantic}, as it inherits the quantum tensor product---and hence the usual notions of positivity for states and complete positivity for morphisms---from the underlying category $\CPM$.

\textbf{Acknowledgements.} Both authors are grateful to Paulo Perinotti for his input and sharing a draft version of \cite{PauloHierarchy}. We would also like to thank \v{C}aslav Brukner, Matty Hoban, and Sam Staton for useful discussions. This work is supported by the ERC under the European Union's Seventh Framework Programme (FP7/2007-2013) / ERC grant n\textsuperscript{o} 320571.

\bigskip

\textit{This is an extended version of a conference paper with the same title~\cite{CatCausal}. While the overall structure is the same, it has been expanded with additional examples and explanation and provides two new technical contributions. The first is a proof that $\Rel$ is `weakly' precausal, in that it satisfies all of the precausal axioms except \Cf. The second and more substantial contribution is the pullback construction in Section~\ref{sec:pullback} which is used to capture all acyclic causal structures, rather than just linear ones as in~\cite{CatCausal}.}

\bigskip

\section{Preliminaries}\label{sec:preliminaries}

We work in the context of \textit{symmetric monoidal categories} (SMCs). An SMC consists of a collection of objects $\textrm{ob}(\mathcal C)$, for every pair of objects, $A, B \in \textrm{ob}(\mathcal C)$ a set $\mathcal C(A,B)$ of morphisms, associative sequential composition `$\circ$' with units $1_A$ for all $A \in \textrm{ob}(\mathcal C)$, associative (up to isomorphism) parallel composition `$\otimes$' for objects and morphisms with unit $I \in \textrm{ob}(\mathcal C)$, and swap maps $\sigma_{A,B} : A \otimes B \to B \otimes A$, satisfying the usual equations one would expect for composition and tensor product. For simplicity, we furthermore assume $\mathcal C$ is \textit{strict}, i.e.
\vspace{-0pt}
\[
 (A \otimes B) \otimes C = A \otimes (B \otimes C) \qquad\qquad A \otimes I = A = I \otimes A \vspace{-0pt}
\]
This is no loss of generality since every SMC is equivalent to a strict one. For details, \cite{MacLane} is a standard reference.

We wish to treat SMCs as theories of physical processes, hence we often refer to objects as \textit{systems} and morphisms as \textit{processes}. We will also extensively use \textit{string diagram} notation for SMCs, where systems are depicted as wires, processes as boxes, and:
\vspace{-0pt}
\begin{eqnarray*} 
g \circ f := \tikzfig{seq-comp} \qquad
   f \otimes g := \tikzfig{par-comp}\\
 1_A := \tikzfig{wire} \quad
   1_I := \emptydiag \quad
   \sigma_{A,B} := \tikzfig{swap} 
   \end{eqnarray*}
Note that diagrams should be read from bottom-to-top. A process $\rho : I \to A$ is called a \textit{state}, a process $\pi : A \to I$ is called an \textit{effect}, and $\lambda : I\to I$ is called a \textit{number} or \textit{scalar}. Depicted as string diagrams:\vspace{-0pt}
\[
\textit{state} := \pointmap{\rho}\quad
\textit{effect} := \copointmap{\pi}\quad
\textit{number} := \lambda
\]
Numbers in an SMC always form a commutative monoid with `multiplication' $\circ$ and unit the identity morphism $1_I$. We typically write $1_I$ simply as $1$.

\subsection{Compact closed categories and higher-order string diagrams} \ \\

\noindent We will begin with a category $\mathcal C$ and construct a new category $\CausC$ of higher-order causal processes. In order to make this construction, we first need a mechanism for expressing higher-order processes. \textit{Compact closure} provides such a mechanism that is convenient within the graphical language and already familiar within the literature on quantum channels, in the guise of the Choi-Jamio\l{}kowski isomorphism.

\begin{definition}\label{def:compact-closed}
  An SMC $\mathcal C$ is called \textit{compact closed} if every object $A$ has a \textit{dual} object $A^{*}$. That is, for every $A$ there exists morphisms $\eta_A: I \to A^{*} \otimes A$ and $\epsilon_A:A \otimes A^{*} \to I$, satisfying:
  \begin{equation*}
    (\epsilon_A \otimes 1_A) \circ (1_A \otimes \eta_A) = 1_A
    \qquad\qquad
    (1_{A^*} \otimes \epsilon_A) \circ (\eta_A \otimes 1_{A^*}) = 1_{A^*}
  \end{equation*}
\end{definition}

We refer to $\eta_A$ and $\epsilon_A$ as a cup and a cap, denoted graphically as \tikzfig{cup} and \tikzfig{cap}, respectively. In this notation, the equations in Definition \ref{def:compact-closed} become:
\[ \tikzfig{line_yank} \qquad\qquad \tikzfig{line_yank2} \]
It is always possible to choose cups and caps in such a way that the canonical isomorphisms $I^* \cong I$, $(A \otimes B)^* \cong A^* \otimes B^*$, and $A \cong A^{**}$ are all in fact equalities. We will assume this is the case throughout this paper.

Crucially, two morphisms in a free compact closed category are equal if and only if their string diagrams are the same. That is, if one diagram can be continuously deformed into the other while maintaining the connections between boxes. Hence, when we draw a string diagram, we mean \textit{any} composition of boxes via cups, caps, and swaps which yields the given diagram, up to deformation. See \cite{SelingerSurvey} for an overview of string diagram languages for monoidal categories.

Compact closed categories exhibit \textit{process-state duality}, 
that is, processes $f : A \to B$ are in 1-to-1 correspondence with states $\rho_f : I \to A^* \otimes B$:\vspace{-0pt}
\begin{equation}\label{eq:map-state}
  \unbig{\boxmap{f}} \ \ \mapsto\ \  \tikzfig{map_to_state}
\end{equation}

Hence, we treat everything as a `state' in $\mathcal C$ and write $f : X$ as shorthand for $f : I \to X$.
In this notation, states are of the form $\rho : A$, effects $\pi : A^*$, and general processes $f : A^* \otimes B$.
Furthermore, we won't require `output' wires to exit upward, and we allow irregularly-shaped boxes.
For example, we can write a process $w : A^* \otimes B \otimes C^* \otimes D$ using `comb' notation:\vspace{-0pt}
\begin{equation}\label{eq:comb-versions}
  \tikzfig{comb-versions}
\end{equation}
Note that we adopt the convention that an $A$-labelled `input' wire is of the same type as an $A^*$-labelled `output'.

While both the LHS and the RHS in equation~\eqref{eq:comb-versions} are notation for the same process $w$, the LHS is strongly suggestive of a second-order mapping from processes $B \to C$ to processes $A \to D$. Composition in this notation simply means applying the appropriate `cap' processes to plug wires together:
\[ \tikzfig{2_Comb-comp} \]

\begin{remark}
  Since oddly-shaped boxes don't uniquely fix any ordering of systems with respect to $\otimes$, we will often `name' each system by giving it a unique type and assume systems are permuted via $\sigma$-maps whenever necessary. This is a common practice e.g., in the quantum information literature.
\end{remark}

Our key examples of compact closed categories to which we will apply our construction to obtain categories of higher-order causal types, 
will be \MatRp and \CPM, which we introduce now.
They contain stochastic matrices and quantum channels, respectively.

\begin{example}
  The category \MatRp has as objects natural numbers. Morphisms $f : m \to n$ are $n \times m$ matrices. Composition is given by matrix multiplication: $(g \circ f)_i^j := \sum_k f_i^k g_k^j$, and the tensor is given by $m \otimes n := mn$ and $f \otimes g$ is the Kronecker product of matrices:\vspace{-0pt}
  \[ (f \otimes g)_{ij}^{kl} := f_i^k g_j^l \vspace{-0pt}\]
 Consequently, the tensor unit $I$ is the natural number $1$, so that states are column vectors $\rho : 1 \to n$, effects are row vectors $\pi : n \to 1$, and scalars are positive numbers $\lambda \in \mathbb R_+$.\\
 \MatRp is compact closed with $n = n^*$, where cups and caps are given by the Kronecker delta, $\delta_{ij}$, $\delta_{ij} = 1$ if $i = j$, $\delta_{ij} = 0$ otherwise. That is:
  \[ \eta^{ij} := \delta_{ij} =: \epsilon_{ij} \vspace{-0pt}\]
\end{example}


\begin{example}
The category \CPM has as objects the sets of linear operators, $\mathcal L(H)$, $\mathcal L(K), \ldots$, on finite dimensional complex Hilbert spaces $H,K,\ldots$ and morphisms $\Phi: \mathcal L(H) \to \mathcal L(K)$ are completely positive maps (i.e., positive maps which are still positive when tensored with other maps) with the usual composition: $\Phi \circ \Psi (a) = \Phi(\Psi(a))$.

The tensor product on objects satisfies $\mathcal L(H) \otimes \mathcal L(K) \cong \mathcal L(H \otimes K)$, and on morphisms it is given by linear extension of $\Phi \otimes \Psi (a \otimes b) = \Phi(a) \otimes \Psi(b)$.
Consequently, the tensor unit is $I = \mathcal L (\mathbb C) \cong \mathbb C$ and a state is a (completely) positive map $\rho:\mathbb C \to \mathcal L(H)$. Identifying such a map with its image on 1, we may say that a state is positive operator in $\mathcal L (H)$.
Effects are (un-normalised) quantum effects, i.e., (completely) positive maps $\pi: \mathcal L (H) \to \mathbb C$, and are of the form: $\pi(a) := \textrm{Tr}(P a)$, for some positive operator $P$. As with \MatRp, the scalars are $\mathbb R_+$. 

\CPM is also compact closed, with cups and caps given by the (un-normalised) Bell state:
\[ \eta = \ketbra{\Phi_0}{\Phi_0} \qquad \epsilon(\rho) = \textrm{Tr}(\rho \ketbra{\Phi_0}{\Phi_0})\]
where $\ket{\Phi_0} = \sum_i \ket{i} \otimes \ket{i}$. Consequentially, $\mathcal L(H)^* = \mathcal L(H)$ and equation \eqref{eq:map-state} gives the basis-dependent version, i.e., `Choi-style', of the Choi-Jamio\l{}kowski isomorphism \cite{DoingItWrong}.
\end{example}

\subsection{$*$-autonomous categories}\ \\

\noindent The biggest convenience of a compact closed structure is also its biggest drawback:
all higher-order structure collapses to first-order structure! For example, if we (temporarily) take $A \limp B := A^* \otimes B$ to be the object whose states correspond to processes from $A$ to $B$, we have:
\begin{align*}
  (A \limp B) \limp C 
  & := (A \limp B)^* \otimes C \\
  & := (A^* \otimes B)^* \otimes C \\
  & \cong A^{**} \otimes B^* \otimes C \\
  & \cong A \otimes B^* \otimes C \\
  & \cong B^* \otimes A \otimes C \\
  & =: B \limp A \otimes C
\end{align*}

As we will soon see, there is a pronounced difference between first-order causal processes which we introduce in the next section, and genuinely higher-order causal processes (see Chapter~\ref{sec:higher-order}). Hence, we would not expect such a collapse. If we look carefully at the calculation above, we see that things went wrong in the third step above. Because in any compact closed category, we always have $(A \otimes B)^* \cong A^* \otimes B^*$, we are able to distribute $(-)^*$ over $\otimes$. If we remove this condition, we obtain a new, weaker kind of category: 

\begin{definition}
  A \textit{$*$-autonomous category} is a symmetric monoidal category equipped with a full and faithful functor $(-)^* : \mathcal C^{\textrm{op}} \to \mathcal C$ such that, by letting:\vspace{-0pt}
  \begin{equation}\label{eq:limp-def}
    A \limp B := (A \otimes B^*)^*\vspace{-0pt}
  \end{equation}
  there exists a natural isomorphism:\vspace{-0pt}
  \begin{equation}\label{eq:closure}
    \mathcal C(A \otimes B, C) \cong \mathcal C(A, B \limp C)\vspace{-0pt}
  \end{equation}
\end{definition}

As the notation and the isomorphism \eqref{eq:closure} suggest, $A \limp B$ is the system whose states correspond to processes from $A$ to $B$.
Indeed, take $A = I$ in \eqref{eq:closure}.

Note that the definition we gave for $A \limp B$ differs from the one we gave for compact closed categories. Indeed any compact closed category is $*$-autonomous, where it additionally holds that:\vspace{-0pt}
\begin{equation}\label{eq:star-dist}
  A \otimes B \cong (A^* \otimes B^*)^*\vspace{-0pt}
\end{equation}
in which case:\vspace{-0pt}
\[ A \limp B := (A \otimes B^*)^* \cong A^* \otimes B^{**} \cong A^* \otimes B\vspace{-0pt} \]

However, in a $*$-autonomous category, the RHS of \eqref{eq:star-dist} is not $A \otimes B$, but something new, called the `par' of $A$ and $B$:\vspace{-0pt}
\begin{equation}\label{eq:par-dual}
  A \parr B := (A^* \otimes B^*)^*\vspace{-0pt}
\end{equation}
This new operation inherits its good behaviour from $\otimes$:\vspace{-0pt}
\[ A \parr (B \parr C) \cong (A \parr B) \parr C \qquad\qquad A \parr B \cong B \parr A \vspace{-0pt}\]

So a compact closed category is just a $*$-autonomous category where $\otimes = \parr$. However, this little tweak yields a much richer structure of higher-order maps.
We think of $A \otimes B$ as the joint state space of $A$ and $B$,
whereas $A \parr B$ is like taking the space of maps from $A^*$ to $B$.
For (first order) state spaces, these are basically the same, but as we go to higher order spaces, $A \parr B$ tends to be much bigger than $A \otimes B$.

We adopt the programmers' convention that $\otimes$ has precedence over $\limp$ and that $\limp$ associates to the right:
\[
A \otimes B \limp C := (A \otimes B) \limp C
\quad \quad A \limp B \limp C := A \limp (B \limp C)\vspace{-0pt}
\]
Either expression above represents the system whose states are processes with two inputs. Indeed \eqref{eq:closure} implies that $A \otimes B \limp C \cong A \limp B \limp C$. Since $\mathcal C$ is symmetric, we can re-arrange the inputs at will, i.e.,\vspace{-0pt}
\begin{equation}\label{eq:lolli-sym}
  A \limp B \limp C \cong B \limp A \limp C\vspace{-0pt}
\end{equation}

\subsection{Discarding, causality, and non-signalling}\label{sec:disc}\ \\

\noindent As noted in \cite{Cnonsig,CKpaperI,CRCaucat,Chiri2}, the crucial ingredient for defining causality is a preferred \textit{discarding} process $\disc_{\!\!A}$ from every system $A$ into $I$, which is compatible with the monoidal structure, in that 
\begin{equation}\label{eq:discarding-eqs}
  \discard_{\!\!A \otimes B} \ :=\  \discard_{\!\!A}\ \discard_{\!\!B}
\qquad
\discard_{\!\!I} \ :=\  1
\end{equation}

 Using this effect, we can define causality as follows:

\begin{definition}\label{def:causal}
  For systems $A$ and $B$ with discarding processes, a process $\Phi : A \to B$ is said to be \textit{causal} if:
  \begin{equation}\label{eq:causal}
    \tikzfig{causal}
  \end{equation}
\end{definition}

Intuitively, causality means that if we disregard the output of a process, it does not matter which process occurred. 

Hence \textit{causal states} 
 produce $1$ when discarded:\vspace{-0pt}
\[ \pointmap{\rho}\ \ \textit{causal} \ \ \iff \ \ \weight{\rho}\ =\ 1 \]
Since discarding the `output' of an effect $\pi : A \to I$ is the identity, there is a unique causal effect for any system, namely discarding itself:
\[ \copointmap{\pi}\ \ \textit{causal} \ \ \iff \ \ \copointmap{\pi}\ =\ \discard \]

\begin{example}\label{ex:mat-causality}
  For \MatRp, the discarding process is a row vector consisting entirely of $1$'s; it sends a state to the sum over its vector entries:\vspace{-0pt}
  \[ \discard\ =\ \left(\begin{matrix}
    1 & 1 & \cdots & 1
  \end{matrix}\right) \qquad \qquad  
  \weight{\rho} \ =\ \sum_i\ \rho^i \]
So, causality is precisely the statement that a vector of positive numbers sums to $1$, i.e., forms a probability distribution. Consequentially, the causality equation \eqref{eq:causal} for a process $\Phi$ states that each column of $\Phi$ must sum to $1$. That is, $\Phi$ is a stochastic map.
\end{example}

\begin{example}\label{ex:cpm-causality}
For \CPM, discard is the trace. Hence causal states are positive operators with unit trace a.k.a. density operators  and causal processes are trace-preserving completely positive maps, a.k.a. quantum channels.
\end{example}

Discarding not only allows us to express when a process is causal,
it also allows us to represent causal relationships between the systems involved.

For example
\begin{definition}\label{def:non-signalling}
A causal process $\Phi: A \otimes B \to A' \otimes B'$ is \emph{one-way signalling} with $\evt{A}$ before $\evt{B}$ (written $\evt{A} \preceq \evt{B}$ where $\evt A := (A, A')$ and $\evt B := (B, B')$)
if there exists a process $\Phi':A \to A'$ such that
\begin{equation}\label{eq:a-before-b}
  \tikzfig{NSBtoA}
\end{equation}
It is one-way signalling with $\evt{B} \preceq \evt{A}$ if there exists $\Phi''$ such that
\begin{equation}\label{eq:b-before-a}
  \tikzfig{NSAtoB}
\end{equation}
Such a process is called \emph{non-signalling} 
if it is both one-way signalling with $\evt{A} \preceq \evt{B}$ and $\evt{B} \preceq \evt{A}$.
\end{definition}

At first, this might seem like curious terminology, since non-signalling processes are a special case of one-way signalling processes. This is because the term `one-way signalling' does not imply that there \textit{is} communication from Alice to Bob, but rather that communication is \textit{only allowed} from Alice to Bob (and not from Bob to Alice). Contrast this with arbitrary causal processes $\Phi: A \otimes B \to A' \otimes B'$, which allow two-way communication in general.

Note that the expressions $\evt A \preceq \evt B$ and $\evt B \preceq \evt A$ in Definition~\ref{def:non-signalling} are not written in terms of individual inputs or outputs of $\Phi$, but rather input/output pairs $\evt A := (A, A')$ and $\evt B := (B, B')$. 
We call such an input/output pair from a given process an \textit{event}. 
If we think of an event in operational terms, it corresponds to a single agent giving an input to his or her black box (e.g., making a measurement choice) and then reading the output.

The definition of one-way signalling readily generalises from two events to $n$ events:

\begin{definition}
A process $\Phi$
is \emph{one-way signalling} with $\evt{A}_1 \preceq \ldots \preceq \evt{A}_{n-1} \preceq \evt{A}_n$ if
\ctikzfig{nfoldNS}
with $\Phi'$ one-way signalling with $\evt{A}_1 \preceq \ldots \preceq \evt{A}_{n-1}$.
\end{definition}

Intuitively, this captures that fact that $\Phi$ is consistent with the linear causal ordering of events $\evt A_i = (A_i, A_i')$. Following~\cite{KissingerIFF}, we can extend this from linear causal orderings to arbitrary (acyclic) causal orderings. For a process $\Phi : A_1 \otimes \ldots \otimes A_n \to A_1' \otimes \ldots \otimes A_n'$, let $\mathcal O$ be a partially ordered set whose elements are the pairs $\{ \evt A_i := (A_i, A_i') \}_{1 \leq i \leq n}$, which we call a \textit{causal ordering} for $\Phi$. Then, for any subset of events $\mathcal E \subseteq \mathcal O$ let $\past(\mathcal E)$ be the down-closure of $\mathcal E$ with respect to the ordering $\preceq$ of $\mathcal O$, i.e.,
\[ \past(\mathcal E) := \{ e \ |\ \exists e' \in \mathcal E\,.\, e \preceq e' \} \]
In particular, this set contains $\mathcal E$ itself. Furthermore, let $\pi_1(\mathcal E)$ and $\pi_2(\mathcal E)$ be given as the direct images of the first and second projections, namely all of the inputs in $\mathcal E$ and all the outputs in $\mathcal E$, respectively.

\begin{definition}\label{def:causal-consistency}
  A process $\Phi : A_1 \otimes \ldots \otimes A_n \to A_1' \otimes \ldots \otimes A_n'$ is consistent with a causal ordering $\mathcal O$ (written $\Phi \vDash \mathcal O$) if for all subsets $\mathcal E \subseteq \mathcal O$, the outputs of $\mathcal E$ only depend on the inputs of $\past(\mathcal E)$. That is, there exists another process $\Phi'$ such that:
  \begin{equation}\label{eq:causal-ord}
    \tikzfig{satisfies}
  \end{equation}
\end{definition}

Note that, for clarity, we have suppressed the symmetry morphisms (i.e., swap maps) re-ordering the inputs and outputs of $\Phi$ in \eqref{eq:causal-ord}. In particular, this equation should be understood to apply to all sets of inputs $\pi_1(\past(\mathcal E))$ and outputs $\pi_2(\mathcal E)$, not just the leftmost ones. 

Definition~\ref{def:causal-consistency} is most easily understood by means of an example. Consider the following process with 5 inputs and outputs:
\ctikzfig{big-phi}
and fix the following causal ordering on input/output pairs of $\Phi$:
\[
  \mathcal O := \left\{
   \ \tikzfig{causal-graph}\ 
  \right\} 
  \qquad \textrm{where} \qquad
  \begin{cases}
    \evt{A} := \text{\footnotesize $(A,A')$} \\
    \evt{B} := \text{\footnotesize $(B,B')$} \\
    \evt{C} := \text{\footnotesize $(C,C')$} \\
    \evt{D} := \text{\footnotesize $(D,D')$} \\
    \evt{E} := \text{\footnotesize $(E,E')$}
  \end{cases}
\]
(Note the ordering is depicted from bottom-to-top, e.g. $\evt{A} \preceq \evt{B}$.) Then, $\Phi \vDash \mathcal O$ if for all $\mathcal E \subseteq \mathcal O$, \eqref{eq:causal-ord} is satisfied. For example, taking $\mathcal E := \{ \evt{B} \}$, we have $\past(\{\evt{B}\}) = \{ \evt{A}, \evt{B}, \evt{C} \}$. So, condition~\eqref{eq:causal-ord} requires that there exists $\Phi' : A \otimes B \otimes C \to B'$ such that:
\ctikzfig{big-phi-eq-2}

Definition~\ref{def:causal-consistency} generalises the other (non)signalling conditions given before. For example, one-way signalling $\evt{A} \preceq \evt{B}$ and $\evt{B} \preceq \evt{A}$ can be stated respectively as:
\[
\tikzfig{bi-channel} \ \vDash \tikzfig{causal-graph-AB}
\qquad\textrm{and}\qquad
\tikzfig{bi-channel} \ \vDash \tikzfig{causal-graph-BA}
\]
whereas non-signalling is equivalent to the following:
\[
\tikzfig{bi-channel} \ \vDash \tikzfig{causal-graph-ns}
\]

There is a close connection between causality~\eqref{def:causal} and consistency with a causal ordering $\mathcal O$. Namely, any (acyclic) string diagram has an evident causal ordering on the inputs/outputs of each of the component boxes, where $\evt{A} \preceq \evt{B}$ if and only if there is a forward-directed path from the box with inputs/outputs $(A,A')$ to the box with inputs/outputs $(B,B')$. For example:
\[
\tikzfig{caus-circ}
\qquad\qquad \textrm{\large$\leadsto$} \qquad\qquad
\tikzfig{causal-graph-2}
\]
It was shown in \cite{KissingerIFF} that all acyclic diagrams in a category $\mathcal C$ respect their associated causal ordering if and only if all processes in $\mathcal C$ satisfy the causality equation~\eqref{eq:causal}.

In the coming sections, we will show that consistency with a causal ordering can be expressed within the logic of a higher-order causal category. As a consequence, higher-order constraints such as \textit{preserving} processes which are consistent with a causal ordering can also be expressed.

\section{Precausal categories}\label{sec:precausal}

Precausal categories give a universe of all processes, and provide enough structure for us to identify which of those processes satisfy first-order and higher-order causality constraints. They are defined as compact closed categories satisfying four extra axioms, which involve discarding and its transpose:
\begin{equation}\label{eq:maxmix-unnorm}
  \tikzfig{maxmix-disc}
\end{equation}
which we call the (unnormalised) \textit{uniform state}.

\begin{definition}\label{def:precausal}
  A \textit{precausal category} is a compact closed category $\mathcal C$ such that:
  \begin{itemize}[leftmargin=1cm] 
    \item[\Ca] $\mathcal C$ has discarding processes for every system, compatible with the monoidal structure as in \eqref{eq:discarding-eqs}.
    \item[\Cb] For every (non-zero) system $A$, the \textit{dimension} of $A$:
    \[ d_A \ :=\  \tikzfig{dim} \]
    is an invertible scalar.
    \item[\Cc] $\mathcal C$ has \textit{enough causal states}:
    \[ \left(\textrm{\footnotesize $\forall \rho\ \textit{causal}\ .\ $}\tikzfig{causal-agree}\right)
     \implies\ \  \tikzfig{causal-agree1} \]
    \item[\Cd] \textit{Second-order causal} processes factorise:
    \[
    \quant{\forall \Phi\ \textit{causal}}{\tikzfig{soc}}
    \implies
    \quant{\exists \Phi_1,\Phi_2\ \textit{causal}}{\tikzfig{soc-factor}}
    \]
  \end{itemize}
\end{definition}

\Ca enables one to talk about causal processes within $\mathcal C$. \Cb enables us to renormalise certain processes to produce causal ones. For example, every non-zero system has at least one causal state, called the \textit{uniform state}. It is obtained by normalising \eqref{eq:maxmix-unnorm}:\vspace{-0pt}
\ctikzfig{uniform-def}
Then:
\begin{equation}\label{eq:uniform-causal}
	\tikzfig{uniform-causal}
\end{equation}

\begin{remark}
Note that by `non-zero' in \Cb, we mean `not a zero object'. For our examples, it will be convenient to allow $\mathcal C$ to have a zero object (e.g. the natural number $0$ in \MatRp and the zero-dimensional Hilbert space in \CPM), in which case $d_0 = 0$ is certainly not invertible.
\end{remark}

\Cc says that processes are characterised by their behaviour on causal states. In the following proposition, we will show that \Cc implies that it suffices to consider only product states to identify a process.

\begin{proposition}\label{prop:enough-product-states}
For any compact closed category $\mathcal C$, \Cc is equivalent to:
\begin{equation}\label{eq:enough-product-states}
  \quant{\forall \rho_1, \rho_2 \ \textit{causal}}
{\tikzfig{enough-ls4}}
\ \implies\ 
\tikzfig{remove-one1}
\end{equation}
\end{proposition}

\begin{proof}
\Cc follows from \eqref{eq:enough-product-states} by taking one of the two systems involved to be trivial. Conversely, assume the premise of \eqref{eq:enough-product-states}. Applying \Cc one time yields:
\ctikzfig{remove-one}
for all causal states $\rho_2$. Bending the wire yields:
\ctikzfig{remove-one-bend}
Hence we can apply \Cc a second time. Bending the wire back down gives the result.
\end{proof}

\Cd is perhaps the least transparent. It says that the only mappings from causal processes to causal processes are `circuits with holes', i.e., those mappings which arise from plugging a causal process into a larger circuit of causal processes. This can equivalently be split into two smaller pieces, which will be helpful in showing \MatRp and \CPM form examples of precausal categories.

\begin{proposition}\label{4to45}
  For a compact closed category $\mathcal C$ satisfying \Ca, \Cb, and \Cc, condition \Cd is equivalent to the following two conditions:
  \begin{itemize}[leftmargin=1cm] 
    \item[\Ce] Causal \textit{one-way signalling} processes factorise:
      \[ \quant{\exists\ \Phi'\ \textit{causal}}{\tikzfig{one-way-sig}}
         \implies
         \quant{\exists\ \Phi_1,\Phi_2\ \textit{causal}}{\tikzfig{one-way-sig2}}
      \]
      \item[\Cf] For all $w : A \otimes B^*$:
      \[ \quant{\forall \Phi \ \textit{causal}}
         {\tikzfig{causal-eff}\ =\ 1}
         \implies
         \quant{\exists \rho \ \textit{causal}}
         {\tikzfig{causal-eff-splits}} \]
  \end{itemize}
\end{proposition}

\begin{proof}
Assume \Cd holds, and let $\Phi$ satisfy the premise of \Ce. Then, for causal $\Psi$, we have:
\ctikzfig{SOC-pf}
Hence, by \Cd there exist causal $\Phi_1', \Phi_2$ such that:
\ctikzfig{SOC-pf2}
Deforming then gives the factorisation in \Ce.

To get to \Cf from \Cd, we take the input and output systems of $w$ trivial, which gives:
\ctikzfig{soc-factor_trivial_systems}
where the second `$=$' follows from the fact that discarding is the unique causal effect.
Conversely, suppose \Ce and \Cf hold. Let $w$ be a second order causal process. Then, for any causal $\rho$ and $\Phi$:
\[ \tikzfig{causal-eff-splits1}\ =\ 1 \]
So, by \Cf:
\ctikzfig{causal-eff-splits2}
Then, by \Cc:
\ctikzfig{causal-eff-splits3}
Hence $w$ satisfies the premise of \Ce, up to diagram deformation, where $\Phi := w$ and $\Phi' := w'$, which implies that it factors as in \Cd.
\end{proof}

\begin{remark}
  Whereas axiom \Cf is just the restriction of \Cd to a special case, \Ce is an abstract version of a familiar result about quantum channels. In the literature, one-way signalling is sometimes called \textit{semi-causal}, whereas the factorisation of $\Phi$ into processes $\Phi_1, \Phi_2$ in \Ce is referred to a \textit{semi-localisable}. Whereas it is immediately clear that semi-localisable implies semi-causal, the converse, i.e. condition \Ce, is non-trivial. It was conjectured for quantum channels by DiVincenzo and independently by Beckman et al~\cite{BGNP}, and proven by Eggeling, Schlingemann, and Werner in~\cite{Eggeling}. In the case of classical probabilistic processes, \Ce can be seen as an instance of the product rule, applied to conditional probabilities of the form $P(A'B'|AB)$ satisfying the conditional independence $P(A'|AB) = P(A'|A)$. See Appendix~\ref{sec:mat-cpm-precausal} for details.
\end{remark}

\begin{lemma}\label{lem:splitting}
  For any $w : A \otimes B^*$:
  \[
  \lquant{\exists \rho}{\tikzfig{causal-eff-splits}}
  \ \iff\ 
  \left(\ \tikzfig{SOC_splits_lemma-long}\ \right)
  \]
\end{lemma}
\begin{proof}
  $(\Leftarrow)$ immediately follows from diagram deformation. For $(\Rightarrow)$, assume the leftmost equation above. Then we can obtain an expression for $\rho$ by plugging the uniform state into $w$ and applying causality of the uniform state:
  \ctikzfig{rho-as-comb}
  Substituting this expression back in for $\rho$ yields:
  \[
    \tikzfig{rho-as-comb1}
    \tag*{\qedhere}
  \]
\end{proof}

\begin{example}
  \MatRp and \CPM are both precausal categories. See appendix for proofs of conditions \Ca-\Cd.
\end{example}

\begin{example}\label{ex:rel-weakly-precausal}
  \Rel, the category whose objects are sets and whose morphisms $R : X \to Y$ are relations $R \subseteq X \times Y$ with $R \otimes S := R \times S$, is \textit{not} a pre-causal category. Nevertheless, it satifies axioms \Ca-\Cc and \Ce, so it could be called a \textit{weakly pre-causal category}. See appendix for proofs and a counter-example of \Cf.
\end{example}

We now show that \Ce implies a general $n$-fold version of itself:

\begin{proposition}\label{w factorization}
Let $\Phi$ be one-way signalling with $\evt A_1 \preceq \ldots \preceq \evt A_n$, then there exists $\Phi_1,\ldots, \Phi_n$ such that
\ctikzfig{seq-of-channels}
\end{proposition}

\begin{proof}
For $n=2$, this is just \Ce.
Suppose the proposition is true for $n-1$.
Then because
\ctikzfig{n-1discards}
for some one way signalling process $\Phi'$ with $\evt A_1 \preceq \ldots \preceq \evt A_{n-1}$,
we have by \Ce that there exists $\Phi'_{n-1}$ and $\Phi_n$ such that
\ctikzfig{n-1and1}
It follows that $\Phi'$ equals $\Phi'_{n-1}$ with the $C$ system discarded,
so that $\Phi'_{n-1}: A_1 \otimes \ldots \otimes A_{n-1} \to A'_1 \otimes \ldots \otimes (A'_{n-1} \otimes C)$ is again one-way signalling.
By assumption $\Phi'_{n-1}$ now factors and hence so does $\Phi$.
\end{proof}


While, as we shall soon see, precausal categories give us a source of processes exhibiting many varieties of definite and indefinite causal structure, the axioms rule out certain, paradoxical causal structures. To see this, we state our first no-go result for a precausal category $\mathcal C$.

\begin{theorem}[No time-travel]\label{thm:no-time-travel}
  No non-trivial system $A$ in a precausal category $\mathcal C$ admits \textit{time travel}. 
  That is, if there exist systems $B$ and $C$ such that for all processes $\Phi$ we have:
  \begin{equation}\label{eq:time-travel}
    \tikzfig{time-phi}\  \textit{causal} \qquad \implies \qquad \tikzfig{time-travel}\ \textit{causal}\vspace{-0pt}
  \end{equation}
  then $A \cong I$.
\end{theorem}

\begin{proof}
  For any causal process $\Psi : A \to A$, we can define:\vspace{-0pt}
  \ctikzfig{time-phi1}\vspace{-0pt}
  which is also a causal process. Then implication \eqref{eq:time-travel} gives:
  \[ \tikzfig{time-travel1} \vspace{-0pt}\]
  Applying $\Cf$, we have:
  \[ \tikzfig{time-travel3} \qquad \implies \qquad \tikzfig{time-travel4} \]
  for some causal state $\rho : I \to A$. That is, $\rho \circ \disc = 1_A$, and by definition of causality for $\rho$, $\disc \circ \rho = 1_I$.
\end{proof}

Note that a special case of Theorem \ref{thm:no-time-travel} implies that if for all causal processes $\Phi : A \to A$ we have
\ctikzfig{Trace-timetravel}
then $A \cong I$.

\section{Constructing \CausC}\label{sec:causC}

We will now describe our main construction, the category $\CausC$ of higher-order causal processes for a precausal category $\mathcal C$. To motivate this construction, we begin by looking at the properties of the set of causal states for some system $A$:
\begin{equation}\label{eq:all-causal-states}
  \left\{ \rho : A \ \bigg|\ \weight{\rho} \ =\ 1 \right\} \subseteq \mathcal C(I, A)
\end{equation}
In the classical and quantum cases, these form convex subsets of real vector spaces (probability distributions in $\mathbb R^n$ and density matrices in $\textrm{SA}_n$, the self adjoint $n \times n$ operators, respectively). We would like to recapture the fact that this set is suitably closed without referring to convexity, so we appeal to duals instead.

\begin{definition}
For any set of states $c \subseteq \mathcal C(I, A)$, we can define the dual set $c^* \subseteq \mathcal C(I, A^*)$ as follows:
\[ c^*\ :=\ \left\{ \pi : A^* \ \bigg|\ \forall \rho \in c\ .\ \unbig{\pointbraket{\rho}{\pi}} \ =\ 1 \right\} \]
\end{definition}

Taking the dual again, we get back to a set of states in $A$. It immediately follows from the definition that $c \subseteq c^{**}$. For the set of all causal states \eqref{eq:all-causal-states}, we see that furthermore $c = c^{**}$. Assuming this property, which we call \textit{closure}, for all objects will play an important role in obtaining $*$-autonomous structure in $\CausC$.

However, this property alone only refers to the compact closed structure of $\mathcal C$ and not the precausal structure. By definition, the discarding process is contained in the dual $c^*$ of the set of all causal states. We also showed in equation~\eqref{eq:uniform-causal} that the transpose of discarding is contained in the set of causal states, up to a scalar (namely $d_A^{-1}$). Generalising this to a property that is symmetric in the roles of $c$ and $c^*$, we obtain the notion of \textit{flatness}.

\begin{definition}\label{def:flat-and-closed}
  A set of states $c \subseteq \mathcal C(I, A)$ is \textit{closed} if $c = c^{**}$ and \textit{flat} if there exist invertible scalars $\lambda, \mu$ such that:\vspace{-0pt}
  \[ \lambda\ \maxmix\ \in c \qquad\qquad \mu\ \discard \in c^* \]
\end{definition}

As we have already noted, the set of all causal states is closed and flat. Many other sets turn out to be closed and flat, including the sets of causal processes and higher-order generalisations thereof. We now use these two properties to define a category which incorporates all of these higher-order types.

\begin{definition}\label{def:causC}
  For a precausal category $\mathcal C$, the category $\CausC$ has as objects pairs:\vspace{-0pt}
  \[\bm A := (A, c_{\bm A} \subseteq \mathcal C(I, A))\vspace{-0pt} \]
  where $c_{\bm A}$ is closed and flat. A morphism $f : \bm A \to \bm B$ is a morphism $f : A \to B$ in $\mathcal C$ such that:\vspace{-0pt}
  \begin{equation}\label{eq:caus-condition}
    \rho \in c_{\bm A}\ \implies\ f \circ \rho \in c_{\bm B}\vspace{-0pt}
  \end{equation}
\end{definition}

We refer to categories of the form $\Caus[\mathcal C]$ for some precausal category $\mathcal C$ as \textit{higher-order causal categories} (HOCCs).

While condition~\eqref{eq:caus-condition} is given in terms of states, closure allows us to equivalently present it in terms of effects or numbers:

\begin{proposition}\label{prop:morphism}
  For objects $\bm A, \bm B$ in $\CausC$ and a morphism $f : A \to B$ in $\mathcal C$, the following are equivalent:
  \begin{itemize}[leftmargin=1cm]
    \item[(i)] $\rho \in c_{\bm A} \implies f \circ \rho \in c_{\bm B}$
    \item[(ii)] $\pi \in c_{\bm B}^* \implies \pi \circ f \in c_{\bm A}^*$
    \item[(iii)] $\rho \in c_{\bm A}, \pi \in c_{\bm B}^* \implies \pi \circ f \circ \rho = 1$
  \end{itemize}
\end{proposition}

\begin{proof}
  (i) $\Rightarrow$ (ii) and (ii) $\Rightarrow$ (iii) follow immediately from the definition of $(-)^*$, so assume (iii) and take any $\rho \in c_{\bm A}$. Then for all $\pi \in c_{\bm B}^*$, $\pi \circ (f \circ \rho) = 1$. Hence $f \circ \rho \in c_{\bm B}^{**} = c_{\bm B}$.
\end{proof}

Since a set of states is closed when $c = c^{**}$, it is natural to ask if $(-)^{**}$ forms a closure operation, namely if it is idempotent. This is an immediate result of the following:

\begin{lemma}\label{lem:triple-dual}
 For any set of states $c$ we have $c^{*} = c^{***}$.
\end{lemma}

\begin{proof}
First, note that:
\[ c \subseteq d \ \implies \ d^* \subseteq c^*. \]
Applying this to $c \subseteq c^{**}$ yields $c^{***} \subseteq c^*$.
But then, it is already the case that $c^*$ is contained in $c^{***}$, 
so $c^{***} = c^*$.
\end{proof}

We will now show that $\CausC$ has the structure of a $*$-autonomous category. To do this, we will first define the tensor $\bm A \otimes \bm B$. For the sets of states $c_{\bm A}$ and $c_{\bm B}$, we denote the set of all product states as follows:\vspace{-0pt}
\[ c_{\bm A} \otimes c_{\bm B} \ :=\ \{ \rho_1 \otimes \rho_2 \ |\ \rho_1 \in c_{\bm A},\ \rho_2 \in c_{\bm B} \}\vspace{-0pt} \]
Then, $c_{\bm A \otimes \bm B}$ is the closure of the set of all product states:\vspace{-0pt}
\[ c_{\bm A \otimes \bm B} := (c_{\bm A} \otimes c_{\bm B})^{**}\vspace{-0pt} \]

\begin{lemma}\label{lem:tensor-effects}
  For any effect $\pi : A^* \otimes B^*$ in $\mathcal C$:
  \begin{equation}\label{eq:tensor-effects}
    \quant{\forall\ \rho \in c_{\bm A \otimes \bm B}}
    {\tikzfig{bi-braket1}\ =\ 1}
    \ \ \iff\ \ 
    \quant{\forall\ \rho_1 \in c_{\bm A}, \rho_2 \in c_{\bm B}}
    {\tikzfig{bi-braket2}\ =\ 1}
  \end{equation}
\end{lemma}

\begin{proof}
  The LHS of \eqref{eq:tensor-effects} states that\vspace{-0pt}
  \[ \pi \in c_{\bm A \otimes \bm B}^* := ((c_{\bm A} \otimes c_{\bm B})^{**})^* = (c_{\bm A} \otimes c_{\bm B})^{***} \vspace{-0pt}\]
  whereas the RHS states that $\pi \in (c_{\bm A} \otimes c_{\bm B})^{*}$. Hence, \eqref{eq:tensor-effects} follows from Lemma \ref{lem:triple-dual}.
\end{proof}

\begin{theorem}\label{CausCisSMC}
  $\CausC$ is an SMC, with tensor given by:\vspace{-0pt}
  \[ \bm A \otimes \bm B := (A \otimes B, c_{\bm A \otimes \bm B})\vspace{-0pt} \]
  and tensor unit $\bm I := (I, \{1\})$.
\end{theorem}
\begin{proof}
The proof is in the appendix.
\end{proof}

Now define objects $\bm A^* := (A^*, c_{\bm A^*})$ in the obvious way, by letting $c_{\bm A^*} := c_{\bm A}^*$. Then

\begin{lemma}\label{lem:star}
  The transposition functor $(-)^* : \mathcal C^{\textrm{op}} \to \mathcal C$:
  \begin{equation}\label{eq:trans-functor}
    A\mapsto A^* \qquad\qquad \tikzfig{transpose}
  \end{equation}
  lifts to a full and faithful functor $(-)^* : \CausC^{\textrm{op}} \to \CausC$, where $\bm A^* := (A^*, c_{\bm A^*})$.
\end{lemma}
\begin{proof}
The main part is showing that $f^*$ is again a morphism, but this follows from the definition of the star on sets of states.
A full proof is in the appendix.
\end{proof}

We now have enough structure to define $\bm A \limp \bm B := (\bm A \otimes \bm B^*)^*$. However, it is enlightening to give an explicit characterisation of the set $c_{\bm A \limp \bm B}$. This will be no surprise:

\begin{lemma}\label{lem:lolli-explicit}
  For objects $\bm A, \bm B \in \CausC$:\vspace{-0pt}
  \[ c_{\bm A \limp \bm B} = \left\{ f : A^* \otimes B\ \bigg|\ \forall \rho \in c_{\bm A}, \pi \in c_{\bm B}^*\ .\ \tikzfig{dual-f1}\ =\ 1 \right\} \]
\end{lemma}

\begin{proof}
  This follows by simplifying:\vspace{-0pt}
  \[
    c_{(\bm A \otimes \bm B^*)^*}
       = c_{(\bm A \otimes \bm B^*)}^* 
       = (c_{\bm A} \otimes c_{\bm B^*})^{***} 
       = (c_{\bm A} \otimes c_{\bm B}^*)^{*}
   \]
  and noting that $f \in (c_{\bm A} \otimes c_{\bm B}^*)^{*}$ is precisely the statement given in the lemma.
\end{proof}

\begin{theorem}\label{thm:causC}
  For any precausal category $\mathcal C$, \CausC is a $*$-autonomous category where $\bm I = \bm I^*$.
\end{theorem}
\begin{proof}
Since compact closed categories already admit an interpretation for $\limp$ satisfying \eqref{eq:closure}, it suffices to show that this isomorphism lifts to \CausC. This follows from the application of Lemma~\ref{lem:tensor-effects}. The complete proof is in the appendix.
\end{proof}

Since \CausC is $*$-autonomous, we can also define the `par' of two systems $\bm A \parr \bm B$. Since $\parr$ is related to $\limp$ via $\bm A \parr \bm B \cong \bm A^* \limp \bm B$, Lemma \ref{lem:lolli-explicit} also yields an explicit form for $c_{\bm A \parr \bm B}$ by replacing $\bm A$ with $\bm A^*$:\vspace{-0pt}
\[
c_{\bm A \parr \bm B} =
\left\{ \rho : A \otimes B\ \bigg|\ \textrm{\footnotesize$\forall \pi \in c_{\bm A}^*, \xi \in c_{\bm B}^*\ .\ $}\tikzfig{par} = 1 \right\} \]
Note that the process $f : A^* \otimes B$ has become a bipartite state $\rho : A \otimes B$. That is, the states $\rho : \bm A \parr \bm B$ are states which are normalised for all product effects. Symbolically, the two monoidal products are defined as follows:\vspace{-0pt}
\[
c_{\bm A \otimes \bm B} = (c_{\bm A} \otimes c_{\bm B})^{**}
\qquad
c_{\bm A \parr \bm B} = (c_{\bm A}^* \otimes c_{\bm B}^*)^*\vspace{-0pt}
\]
One can easily check that $(c_{\bm A}^* \otimes c_{\bm B}^*) \subseteq (c_{\bm A} \otimes c_{\bm B})^*$. Thus, since $(-)^*$ reverses subset inclusions, that
$c_{\bm A \otimes \bm B} \subseteq c_{\bm A \parr \bm B}$.
Consequently, the identity $1_{A \otimes B}$ in $\mathcal C$ lifts to a canonical embedding $\bm A \otimes \bm B \hookrightarrow \bm A \parr \bm B$ in \CausC. This agrees with the intuition given in Section~\ref{sec:precausal} that $\bm A \parr \bm B$ is the `larger' of the two ways to combine $A$ and $B$ into a joint system.

\begin{remark}
  A $*$-autonomous category with coherent isomorphism $I \cong I^*$, such as \CausC, is also called an ISOMIX category~\cite{ISOMIX}. This innocent-looking extra condition actually gives a great deal more structure. For instance, even though we showed it concretely, the existence of a canonical morphism $A \otimes B \to A \parr B$ is implied purely from this extra structure.
\end{remark}

Rather than thinking of \CausC as a totally new category constructed from $\mathcal C$, it is useful to think of it as endowing the processes in $\mathcal C$ with a much richer type system. As in the compact closed case, it suffices to consider only processes out of $\bm I$ and we use $\rho : \bm X$ as shorthand for $\rho : \bm I \to \bm X$. However, unlike before, we will often use a statement of the form $\rho : \bm X$ as a \textit{proposition} about a state $\rho \in \mathcal C(I, X)$.

\begin{proposition}
  For a system $\bm X = (X, c_{\bm X})$ in \CausC and a state $\rho \in \mathcal C(I, X)$, $\rho : \bm X$ if and only if $\rho \in c_{\bm X}$.
\end{proposition}

\begin{proof}
  Since $1$ is the unique state in $c_{\bm I}$, the result follows immediately from the definition of morphism in \CausC:\vspace{-0pt}
  \[
    \rho : \bm X \ \iff\  \rho \circ 1 \in c_{\bm X} \ \iff\ \rho \in c_{\bm X}
    \tag*{\qedhere}
  \]
\end{proof}

From now on, we will use $\rho : \bm X$ and $\rho \in c_{\bm X}$ interchangeably without further comment. We will also mix the graphical notation with the type-theoretic. So, for instance, if we write:\vspace{-0pt}
\[
\tikzfig{composition}\ \ :\ \bm A \limp (\bm B \limp \bm C) \limp \bm D\vspace{-0pt}
\]
this should be interpreted as a morphism in $\mathcal C(I, A^* \otimes B \otimes C^* \otimes D)$, along with the assertion that this morphism has type $\bm A \limp (\bm B \limp \bm C) \limp \bm D$. In particular, diagrams always depict $\mathcal C$-morphisms, as opposed to \CausC-morphisms, so there is no ambiguity about whether parallel composition means $\otimes$ or $\parr$.

Furthermore, if we state that two types are isomorphic without giving the isomorphism explicitly, it should be understood that the underlying isomorphism in $\mathcal C$ is just the identity, up to a possible permutation of systems. In particular, $\bm X \cong \bm Y$ implies that $\rho : \bm X$ if and only if $\rho : \bm Y$.

\section{First order systems}\label{sec:first-order}

For any precausal category $\mathcal C$, we can always form the SMC of (first-order) causal processes $\mathcal C_c$ by restricting just to those processes satisfying the causality equation~\eqref{eq:causal}. Since \CausC is supposed to contain first and higher-order causal processes, one would naturally expect $\mathcal C_c$ to embed in \CausC.

\begin{definition}
A system $\bm A = (A, c_{\bm A})$ in $\CausC$ is called \emph{first order} if it is of the form $(A, \{ \disc_{\!\!A} \}^*)$.
\end{definition}

Note that $\{ \disc_{\!\!A} \}^*$ is precisely the set \eqref{eq:all-causal-states} of causal states of type $A$. Clearly this set is flat and closed. Indeed, this was the motivation for these conditions in the first place. Now, we show that the processes between first-order systems in \CausC are exactly as expected.

\begin{proposition}\label{prop:causal-char}
Let $\bm A, \bm B$ be first-order systems. Then $f$ is a morphism from $\bm A$ to $\bm B$ if and only if it is causal.
\end{proposition}

\begin{proof}
  We first compute $c_{\bm A}^*$ for a first-order system. Suppose $\pi \in c_{\bm A}^*$. Then for all causal states $\rho$, $\pi \circ \rho = 1 = \disc_{\!\!A} \circ \rho$, so by \Cc $\pi = \disc_{\!\!A}$. Hence $c_{\bm A}^* = \{ \disc_{\!\!A} \}$.

  Now, by Proposition~\ref{prop:morphism}, $\Phi \in \mathcal C(A, B)$ is a morphism from $\bm A$ to $\bm B$ if and only if for every $\pi \in c_{\bm B}^*$, $\pi \circ \Phi \in c_{\bm A}^*$. Since both of these sets of effects only contain discarding, this reduces to the causality equation \eqref{eq:causal}.
\end{proof}

Furthermore, first-order systems are closed under $\otimes$.

\begin{proposition}\label{prop:fo-tensor}
  For first order systems $\bm A$, $\bm B$, $\bm A \otimes \bm B$ is also a first-order system, given by:\vspace{-0pt}
  \[ \bm A \otimes \bm B = (A \otimes B, \{ \disc_{\!\!A} \disc_{\!\!B} \}^*)\vspace{-0pt} \]
\end{proposition}

\begin{proof}
  It suffices to show that $c_{\bm A \otimes \bm B}^* = \{ \disc_{\!\!A} \disc_{\!\!B} \}$. Let $\pi \in c_{\bm A \otimes \bm B}^*$, then for all causal states $\rho_1 \in c_{\bm A}, \rho_2 \in c_{\bm B}$:
  \[ \tikzfig{bi-braket2}\ =\ 1 \]
  Hence, by Proposition \ref{prop:enough-product-states}, $\pi = \disc_{\!\!A} \disc_{\!\!B}$.
\end{proof}

\begin{corollary}\label{cor:embedding}
  There exists a full, faithful, monoidal embedding of the category $\mathcal C_c$ of causal processes into \CausC via:\vspace{-0pt}
  \[ A \mapsto (A,\{\disc_{\!\!A}\}^*) \qquad\qquad f \mapsto f \vspace{-0pt}\]
\end{corollary}

Hence, the full sub-category of first-order systems and processes behaves as expected; it is equivalent to $\mathcal C_c$. Perhaps a more surprising corollary to Proposition \ref{prop:fo-tensor} is the following.

\begin{corollary}\label{cor:fo-tensor-is-par}
Let $\bm A$ and $\bm B$ be first order systems, then:\vspace{-0pt}
\[ \bm A \otimes \bm B \cong \bm A \parr \bm B \vspace{-0pt}\]
\end{corollary}

\begin{proof}
  $c_{\bm A \parr \bm B} := (c_{\bm A}^* \otimes c_{\bm B}^*)^* = \{ \disc_{\!\!A} \disc_{\!\!B} \}^* = c_{\bm A \otimes \bm B}$.
\end{proof}

So for first-order systems, there is really only one way to form the `joint system'. However, we will now see that for higher-order systems, this is very much not the case.





\section{Higher-order systems with definite causal order}\label{sec:higher-order}

While it is important that the category of causal processes embeds fully and faithfully in \CausC when one restricts to first-order systems, the chief interest of \CausC are its higher-order systems.
The goal of this section is to show that certain collections of maps fit nicely within the developed type theory.

The first non-trivial second-order system that it is natural to consider is $\bm A \limp \bm B$ for first-order systems $\bm A$, $\bm B$. The isomorphism \eqref{eq:closure} for $*$-autonomous categories restricts to:\vspace{-0pt}
\[ \CausC(\bm A, \bm B) \cong \CausC(\bm I, \bm A \limp \bm B) \vspace{-0pt}\]
so `states' $\Phi : \bm A \limp \bm B$ are in bijective correspondence with morphisms from $\bm A$ to $\bm B$ in \CausC. That is, they are precisely the causal processes from $\bm A$ to $\bm B$.

Now, starting from first-order systems $\bm A, \bm A', \bm B, \bm B'$, we have two ways to form the `joint system' from $\bm A \limp \bm A'$ and $\bm B \limp \bm B'$, via $\otimes$ and $\parr$. Before we characterise these systems, we examine the dual of a second-order system.

If we take a process $w: (\bm A \limp \bm B)^*$ in the dual system, we know from \Cd that that $w$ must split into two pieces, a causal state and the discard map. In fact, using flatness, we can strengthen this condition by only requiring $\bm B$ to be first-order.

\begin{lemma}\label{lem:ho-splitting}
  For any system $\bm X$, first-order system $\bm B$, and process $w : (\bm X \limp \bm B)^*$, there exists $\rho : \bm X$ such that:\vspace{-0pt}
  \begin{equation}\label{eq:ho-splitting}
    \tikzfig{ho-causal-eff-splits-types}
  \end{equation}
\end{lemma}
\begin{proof}
  Since $\bm B$ is first-order, $c_{\bm B}^* = \{\disc_{\!\!B}\}$ and by flatness, for some $\mu$, $\mu \disc_{\!\!X} \in c^*_{\bm X}$. Hence, for any (first-order) causal process $\Phi : X \to B$, we have $\disc_{\!\!B} \circ \mu \Phi = \mu\ \disc_{\!X}$. Hence $\mu \Phi$ is of type $\bm X \to \bm B$. It follows by definition of $(\bm X \limp \bm B)^*$ that:\vspace{-0pt}
  \ctikzfig{ho-causal-eff}
  That is, $\mu w$ sends every causal process $\Phi$ to $1$, so by $\Cf$:\vspace{-0pt}
  \[ \left(\tikzfig{ho-causal-eff-splits}\right)
     \quad\implies\quad
     \left(\tikzfig{ho-causal-eff-splits1}\right)\vspace{-0pt}
  \]
  It is then straightforward to show that $\rho := \mu^{-1} \rho'$ is a state of $\bm X$, e.g., by plugging the uniform state for system $\bm B$ into both sides of the equation above.
\end{proof}

Equivalently, by Lemma \ref{lem:splitting}, $w : (\bm X \limp \bm A)^*$ implies:
\begin{equation}\label{eq:ho-splitting-explicit}
  \tikzfig{SOC_splits_lemma}
\end{equation}




We now characterize the two ways to combine the spaces $\bm A \limp \bm A'$ and $\bm B \limp \bm B'$:

\begin{theorem}\label{type of non-signalling}
For first-order systems $\bm A, \bm A', \bm B, \bm B'$, a process
\[
\tikzfig{bi-phi}
\]
in $\mathcal C$ has type $(\bm A \limp \bm A')\otimes (\bm B \limp \bm B')$ in \CausC if and only if it is causal and non-signalling.
\end{theorem}

\begin{proof}
First assume that $\Phi : (\bm A \limp \bm A')\otimes (\bm B \limp \bm B')$. Then, since discarding $B'$ is causal, we can regard it as a morphism $\disc_{\!\!B'} : \bm B' \to \bm I$. Hence by functoriality of $\otimes$ and $\limp$, we have:\vspace{-0pt}
\[ \tikzfig{disc-B}\ \ :\ (\bm A \limp \bm A')\otimes (\bm B \limp \bm I) \vspace{-0pt}\]
Then we can transform to an equivalent type as follows:\vspace{-0pt}
\begin{align*}
  (\bm A \limp \bm A') \otimes (\bm B \limp \bm I)
  & \cong (\bm A \limp \bm A') \otimes \bm B^* \\
  & \cong ((\bm A \limp \bm A')^* \parr \bm B)^* \\
  & \cong ((\bm A \limp \bm A') \limp \bm B)^*\vspace{-0pt}
\end{align*}
Hence, by Lemma \ref{lem:ho-splitting}, $\Phi$ splits as $\disc_{\!\!B} : \bm B$ and $\Phi' : \bm A \limp \bm A'$. This gives exactly the first non-signalling equation:
\ctikzfig{NSBtoA}
The second equation is shown similarly, by plugging in $\disc_{\!\!A'}$.

Conversely, suppose that $\Phi$ is causal and non-signalling. Then it satisfies the two non-signalling equations in Definition \ref{def:non-signalling}. Hence by \Ce, it can be factored in two ways:
\ctikzfig{two-causal-factorizations}
for causal processes $\Phi_i, \Psi_i$.

Now, take any effect $w: ((\bm A\multimap \bm A')\otimes (\bm B \multimap \bm B'))^* \cong (\bm B \multimap \bm B') \multimap (\bm A\multimap \bm A')^*$. For any causal state $\rho$,
\[ \tikzfig{phi2rho} \ \ :\ \bm B \limp \bm B' \]
Plugging this into one side of $w$ gives:
\[ \tikzfig{SOC2-state-splits0}\ \ :\ (\bm A\multimap \bm A')^* \]
Applying equation \eqref{eq:ho-splitting-explicit} gives:
\ctikzfig{SOC2-state-splits}
Hence by enough causal states we have
\ctikzfig{SOC2-splits}
It then follows that
\ctikzfig{SOC2Phi-1}
\ctikzfig{SOC2Phi-2}
Therefore $\Phi : ((\bm A\multimap \bm A')\otimes (\bm B \multimap \bm B'))^{**} = (\bm A\multimap \bm A')\otimes (\bm B \multimap \bm B')$.
\end{proof}

The proof above also generalises straightforwardly to show that a process is $n$-paritite non-signalling, i.e., for all $i$:
\ctikzfig{n-partite-NS}
if and only if:\vspace{-0pt}
\[ \Phi : (\bm A_1 \limp \bm A_1') \otimes
          (\bm A_2 \limp \bm A_2') \otimes
          \ldots \otimes
          (\bm A_n \limp \bm A_n')\vspace{-0pt} \]

\begin{theorem}
  For first-order systems $\bm A, \bm A', \bm B, \bm B'$, a process $\Phi$ is of type $(\bm A \limp \bm A') \parr (\bm B \limp \bm B')$ if and only if it is causal. That is:\vspace{-0pt}
  \[ (\bm A \limp \bm A') \parr (\bm B \limp \bm B') \cong
     \bm A \otimes \bm B \limp \bm A' \otimes \bm B' \vspace{-0pt}\]
\end{theorem}

\begin{proof}
  We rely on the relationship between $\limp$ and $\parr$:\vspace{-0pt}
  \begin{align*}
    (\bm A \limp \bm A') \parr (\bm B \limp \bm B')
    & \cong \bm A^* \parr \bm A' \parr \bm B^* \parr \bm B' \\
    & \cong \bm A^* \parr \bm B^* \parr \bm A' \parr \bm B' \\
    & \cong (\bm A^* \parr \bm B^*)^* \limp \bm A' \parr \bm B' \\
    & \cong \bm A \otimes \bm B \limp \bm A' \parr \bm B'\vspace{-0pt}
  \end{align*}
  Then, since $\bm A'$ and $\bm B'$ are first-order, $\bm A' \parr \bm B' \cong \bm A' \otimes \bm B'$, which completes the proof.
\end{proof}

So $(\bm A \limp \bm A') \parr (\bm B \limp \bm B')$ forms the joint system consisting of \textit{all} causal processes from $\bm A \otimes \bm B$ to $\bm A' \otimes \bm B'$, including the signalling ones, e.g.\vspace{-0pt}
\[
\tikzfig{swap}\ \ :\ (\bm A \limp \bm B) \parr (\bm B \limp \bm A)\vspace{-0pt}
\]

Hence, $\otimes$ and $\parr$ represent two extremes by which $\bm A \limp \bm A'$ and $\bm B \limp \bm B'$ can be combined, namely by requiring them to be non-signalling or imposing no non-signalling conditions. In the next section, we will see how to recover types that live in between these two extremes: the one-way signalling processes.

\subsection{Linear causal orders via combs}

\begin{theorem}\label{n=2 non signalling}
For first order systems $\bm A, \bm A', \bm B, \bm B'$,
a process $w$ is one-way signalling ($\evt{A} \preceq \evt{B}$) if and only if:\vspace{-0pt}
\[
\tikzfig{bi-w}\ \ :\ \bm A \multimap (\bm A' \multimap \bm B) \multimap \bm B'\vspace{-0pt}
\]
\end{theorem}
\begin{proof}
Suppose $\Phi$ is $\evt A \preceq \evt B$. First, we deform $\Phi$ to put the two $A$-labelled systems below the two $B$-labelled systems:
\[ \tikzfig{bi-w} \ \ \mapsto\ \ \tikzfig{bi-phi2} \vspace{-0pt}\]
The one-way signalling equation \eqref{eq:a-before-b} then becomes:\vspace{-0pt}
\begin{equation}\label{eq:bi-w-ns}
  \tikzfig{bi-phi-ns}\vspace{-0pt}
\end{equation}
Now, by \eqref{eq:lolli-sym} we have:\vspace{-0pt}
\[ \bm A \limp (\bm A' \limp \bm B) \limp \bm B' 
   \cong (\bm A' \limp \bm B) \limp \bm A \limp \bm B' \vspace{-0pt}\]
So, for $w$ to be a member of the above type, it must send all causal processes $\Psi : \bm A' \multimap \bm B$ to causal processes. This immediately follows from \eqref{eq:bi-w-ns}:\vspace{-0pt}
\ctikzfig{bi-phi-ns1}
Conversely, if $w$ sends causal processes to causal processes, it factorises as in \Cd, which implies \eqref{eq:bi-w-ns}:\vspace{-0pt}
\[\tikzfig{bi-phi-ns2}
  \tag*{\qedhere}
\]
\end{proof}

Hence, one-way signalling admits a more general class of processes than (two-way) non-signalling.

\begin{remark}\label{rem:type-embedding}
One can show the embedding of non-signalling processes into one-way signalling processes purely at the level of types by relying on the \textit{linear distributivity} property of $*$-autonomous categories \cite{Cockett1997133}. Namely, in any $*$-autonomous category, there exists a canonical mapping:\vspace{-0pt}
\[ (\bm X \parr \bm Y) \otimes \bm Z \to \bm X \parr (\bm Y \otimes \bm Z) \vspace{-0pt}\]
We can use this to construct an embedding of non-signalling processes into one-way signalling processes:\vspace{-0pt}
\begin{align*}
  (\bm A \limp \bm A') \otimes (\bm B \limp \bm B')
  & \cong (\bm A^* \parr \bm A') \otimes (\bm B^* \parr \bm B') \\
  & \to \bm A^* \parr (\bm A' \otimes (\bm B^* \parr \bm B')) \\
  & \to \bm A^* \parr (\bm A' \otimes \bm B^*) \parr \bm B' \\
  & \cong \bm A \limp (\bm A' \limp \bm B) \limp \bm B'\vspace{-0pt}
\end{align*}
Similarly, we can use the embedding $\bm X \otimes \bm Y \to \bm X \parr \bm Y$ noted in section \ref{sec:causC} to show that one-way signalling processes embed in the type $(\bm A \limp \bm A') \parr (\bm B \limp \bm B')$ of all processes.
\begin{align*}
  \bm A \limp (\bm A' \limp \bm B) \limp \bm B'
  & \cong \bm A^* \parr ((\bm A')^* \parr \bm B)^* \parr \bm B' \\
  & \cong \bm A^* \parr (\bm A' \otimes \bm B^*) \parr \bm B' \\
  & \to \bm A^* \parr \bm A' \parr \bm B^* \parr \bm B' \\
  & \cong (\bm A \limp \bm A') \parr (\bm B \limp \bm B') \\
\end{align*}
\end{remark}

Following \cite{PauloHierarchy}, we generalise from 2-party, one-way signalling processes to $n$-party processes by recursively defining a the type of $n$-\textit{combs}.

 

\begin{definition}
The \emph{$n$-combs} $C_n$ are defined by\vspace{-0pt}
\begin{itemize}
\item $C_0 = \bm I$,
\item $C_{i+1} = \bm B_{-i} \multimap C_i \multimap \bm B_{i+1}$\vspace{-0pt}.
\end{itemize}
\end{definition}

A 1-comb has type $\bm B_0 \limp \bm I \limp \bm B_1 \cong \bm B_0 \multimap \bm B_1$, so it is just a causal process. For higher combs, the `$-i$' is employed to maintain the left-to-right order of indices. For example, a 3-comb has type:\vspace{-0pt}
\[ \tikzfig{3-comb-bigger} : \bm B_{-2} \limp (\bm B_{-1} \limp (\bm B_0 \limp \bm B_1) \limp \bm B_2) \limp \bm B_3 \vspace{-0pt}\]
When necessary, we rename $\bm A_i := \bm B_{2i - n - 1}$ and $\bm A_i' := \bm B_{2i - n}$ to obtain e.g.,

\[ \bm A_1 \limp (\bm A_1' \limp (\bm A_2 \limp \bm A_2') \limp \bm A_3) \limp \bm A_3' \vspace{-0pt}\]

This recursive definition carries the following intuition. If we think of an $n$ comb as a communication protocol with $n$ input/output steps, then an $(n+1)$-comb is an $(n+1)$ step communication protocol, namely something which takes an initial input, then runs an $n$-step communication protocol, and produces a final output. Hence, we can represent the overall process of two agents running a communication protocol by plugging together Alice's $n$-comb and Bob's $(n-1)$-comb:\vspace{-0pt}
\[ \tikzfig{n-comb-plug2}\ :\ \ \bm A_1 \limp \bm A'_n \vspace{-0pt}\]
We give an alternative characterisation for combs in \CausC, which will relate to one-way signalling processes.

\begin{lemma}\label{w and w'}
  For any $n$-comb $w : C_n$, discarding the output $\bm A'_n$ separates as follows, for some $w'$:\vspace{-0pt}
  \begin{equation}\label{eq:n-comb-sep}
    \tikzfig{n-sep}
  \end{equation}
\end{lemma}
\begin{proof}
Plugging any causal state into the first input of $w$ and discarding the last output yields:\vspace{-0pt}
\[
\tikzfig{n-sep-lhs}\ \ :\ C_{n-1} \limp \bm I\vspace{-0pt}
\]
Then:\vspace{-0pt}
\begin{align*}
  C_{n-1} \limp \bm I  \cong C_{n-1}^* &  \cong (\bm B_{-(n-2)} \limp C_{n-2} \limp \bm B_{n-1})^* \\
  & \cong (\bm B_{-(n-2)} \otimes C_{n-2} \limp \bm B_{n-1})^*\vspace{-0pt}
\end{align*}
Hence by Lemma~\ref{lem:ho-splitting}, in particular equation \eqref{eq:ho-splitting-explicit}, we obtain:
\ctikzfig{n-sep-lhs-sep}
The result then follows from enough causal states.
\end{proof}

Note that we haven't actually said that $w'$ is itself an $(n-1)$-comb. We will show this now.

\begin{theorem}\label{n comb to n-1 comb}
$w$ is an $n$-comb, i.e., $w : C_n$, if and only if it separates as in equation \eqref{eq:n-comb-sep} for some $w' : C_{n-1}$.
\end{theorem}
\begin{proof}
By induction. For $n=1$ the theorem is true because a $0$-comb is always $\bm I$.
Suppose the theorem is true for $n$.
Let $w$ be an $(n+1)$-comb. We need to show that $w'$ is an $n$-comb.
So let $y$ be any $(n-1)$-comb. Then, if we form the process:\vspace{-0pt}
\begin{equation}\label{eq:ih-y}
  \tikzfig{IndHyp-y}\vspace{-0pt}
\end{equation}
then clearly discarding the top output results in an $(n-1)$ comb (namely $y$) and a discard.
So by the induction hypotheses, \eqref{eq:ih-y} is an $n$-comb. Therefore we have\vspace{-0pt}
\ctikzfig{IndHyp-y2}\vspace{-0pt}
where $(*)$ follows from the definition of $(n+1)$-comb and $(**)$ is Lemma \ref{w and w'}. Hence $w'$ sends any $(n-1)$-comb to a causal map, so $w'$ is itself an $n$-comb.

Conversely, let $w'$ in equation \eqref{eq:n-comb-sep} be an $n$-comb, and take any $n$-comb $y$.
Then by the induction hypothesis, discarding the top output of $y$ separates as discarding and an $(n-1)$-comb $y'$. Hence:\vspace{-0pt}
\ctikzfig{n-comb-trace}\vspace{-0pt}
so $w$ is an $(n+1)$-comb.
\end{proof}

Hence, $n$-combs can be characterised inductively in exactly the same way as $n$-party one-way signalling processes. Since $1$-combs are just causal processes, the following is immediate.

\begin{corollary}
For first order systems $\bm A_1, \bm A'_1, \ldots, \bm A_n, \bm A'_n$, a map $w:A_1 \otimes \ldots \otimes A_n \to A'_1 \otimes \ldots \otimes A'_n$ is one-way signalling ($\evt{A}_1 \preceq \ldots \preceq \evt{A}_n$) if and only if it is of type $\bm A_1 \multimap ( \bm A'_1 \multimap ( \ldots ) \multimap \bm A_n ) \multimap \bm A'_n$.
That is, it is an $n$-comb.
\end{corollary}

Stated in terms of consistency with causal orderings, as defined in section~\ref{sec:disc}, the above corollary says that $\Phi : \bm A_1 \multimap ( \bm A'_1 \multimap ( \ldots ) \multimap \bm A_n ) \multimap \bm A'_n$ if and only if:
\[
\tikzfig{nfold}\ \vDash\ \tikzfig{lin-ord}
\]



Proposition~\ref{w factorization}, then generalises the characterisation theorem for quantum combs in \cite{PaviaIndef} to \CausC for any precausal $\mathcal C$: an $n$-comb always factors as a sequence of `memory channels', i.e., a composition of causal processes of the form:
\ctikzfig{ch-w-mem}

\subsection{Directed acyclic causal orders via pullback}\label{sec:pullback}\ \\

\noindent 
We now extend from types which express consistency with arbitrary, linear causal orderings to all acyclic causal orderings. To do so, we first give a new characterisation of the consistency relation $\Phi \vDash \mathcal O$ defined in section~\ref{sec:disc} in terms of all of the total orders refining the partial order $\mathcal O$. Recall the following standard definition:

\begin{definition}
For a partial order $\mathcal O$, a \textit{totalisation} $\mathcal O'$ is a total order with the same elements as $\mathcal O$ such that $e \preceq_{\mathcal O} e' \implies e \preceq_{\mathcal O'} e'$.

\end{definition}

Totalisations are not unique. For example, a typical `common cause' ordering admits two possible totalisations:
\[
\tikzfig{common-cause}
\qquad
\textrm{\large $\leadsto$}
\qquad
\tikzfig{common-cause-lin1}
\qquad
\textrm{and}
\qquad
\tikzfig{common-cause-lin2}
\]
corresponding to imposing the (extraneous) causal orderings $\evt{B} \preceq \evt{C}$ or $\evt{C} \preceq \evt{B}$.
Its important to note that causal constraints come from the \textit{absence} of edges in the Hasse diagrams (i.e., diagrams to visualize a partial order) above, rather than the presence. That is, the absence of an edge between $\evt{B}$ and $\evt{C}$ in the leftmost diagram asserts the independence of the output of $\evt{B}$ from the input of $\evt{C}$ and vice-versa. Hence, consistency with $\mathcal O$ implies consistency with any totalisation of $\mathcal O$.

\begin{lemma}\label{lem:o-implies-total}
  For a process $\Phi$, a causal ordering $\mathcal O$, and a totalisation $\mathcal O'$ of $\mathcal O$, we have $\Phi \vDash \mathcal O \implies \Phi \vDash \mathcal O'$.
\end{lemma}

\begin{proof}
  Assume $\Phi \vDash \mathcal O$. Then, for any subset $\mathcal E \subseteq \mathcal O$, there exists a process $\Phi'$ such that:
  \begin{equation}\label{eq:sat-O}
    \tikzfig{satisfies-O}
  \end{equation}
  But since $e \preceq_{\mathcal O} e' \implies e \preceq_{\mathcal O'} e'$, we have $\past_{\mathcal O}(\mathcal E) \subseteq \past_{\mathcal O'}(\mathcal E)$. Hence \eqref{eq:sat-O} implies that $\Phi$ factors as required by Definition~\ref{def:causal-consistency}:
  \begin{equation*}
    \tikzfig{satisfies-Op}
  \end{equation*}
  Therefore $\Phi \vDash \mathcal O'$.
\end{proof}

Combining this with its converse yields the following theorem.

\begin{theorem}\label{thm:totalisation}
  For a process $\Phi$ and a causal ordering $\mathcal O$, $\Phi \vDash \mathcal O$ if and only if, for every totalisation $\mathcal O'$ of $\mathcal O$, $\Phi \vDash \mathcal O'$.
\end{theorem}

\begin{proof}
 $(\Rightarrow)$ follows immediately from Lemma~\ref{lem:o-implies-total}. 
For $(\Leftarrow)$, let $\mathcal E \subseteq \mathcal O$ be any subset.
Split $\mathcal O$ into two parts, $\mathcal O_1 := \past_{\mathcal O}(\mathcal E)$ 
and $\mathcal O_2 := \mathcal O \backslash \past_{\mathcal O}(\mathcal E)$.
Define a total ordering $\mathcal O'$ on $\mathcal O_1 \cup \mathcal O_2$ 
by requiring that every element in $\mathcal O_1$ is below every element in $\mathcal O_2$ 
and taking any totalization on $\mathcal O_1$ and $\mathcal O_2$.
This order refines $\mathcal O$ because $\past_{\mathcal O}(\mathcal E)$ is downward-closed, 
and by construction $\past_{\mathcal O}(\mathcal E) = \past_{\mathcal O'}(\mathcal E)$. 
Hence $\Phi \vDash \mathcal O'$ implies:

  \ctikzfig{satisfies-OOp}
  Since we can find a suitable totalisation to give the equation above for any subset $\mathcal E \subseteq \mathcal O$, we have $\Phi \vDash \mathcal O$.
\end{proof}

Hence, we can always replace consistency with a causal ordering with consistency with all of its totalisations, i.e., all of the linear orders which refine it. For example:
\begin{equation}\label{eq:totalise-order}
\tikzfig{big-phi-3} \ \vDash\ \tikzfig{common-cause}
\quad\iff \quad
\left(\ \ 
\tikzfig{big-phi-3} \ \vDash\ \tikzfig{common-cause-lin1}
\quad \wedge \quad 
\tikzfig{big-phi-3} \ \vDash\ \tikzfig{common-cause-lin2}
\ \ \right)
\end{equation}
We already showed in the previous section that we can capture any linear ordering via combs. Hence, to capture any causal ordering, we only need to be able to take conjunctions of linear causal orders. This can be accomplished via \textit{intersections} of types, much like those defined in \cite{PauloHierarchy} for the quantum case.

The types on the RHS of \eqref{eq:totalise-order} are, respectively:
\[
\bm X := \bm A \limp (\bm A' \limp (\bm B \limp \bm B') \limp \bm C) \limp \bm C'
\qquad\textrm{and}\qquad
\bm X' := \bm A \limp (\bm A' \limp (\bm C \limp \bm C') \limp \bm B) \limp \bm B'
\]
Our goal now is to form the intersection $\bm X \cap \bm X'$, which will consist precisely of those causal processes consistent with the causal ordering given by $\bm X$ and the one given by $\bm X'$. Categorically, intersections are given by pullback, so our first goal will be to find an object in which both $\bm X$ and $\bm X'$ embed. Since combs are special cases of causal processes, we can embed both $\bm X$ and $\bm X'$ into the system of \textit{all} causal processes of the appropriate type, namely:
\[ \bm X \hookrightarrow (\bm A \otimes \bm B \otimes \bm C \limp \bm A' \otimes \bm B' \otimes \bm C') \hookleftarrow \bm X' \]
In fact, we can always find such embeddings for any system built inductively from first-order systems:

\begin{theorem}\label{thm:fo-embed}
  Any object $\bm X$ built inductively from first order systems has a canonical embedding of the form:
  \[ e : \bm X \rightarrow (\bm A_1 \otimes \ldots \otimes \bm A_m \limp \bm A_1' \otimes \ldots \otimes \bm A_n') \]
  for some first-order systems $\bm A_1, \ldots, \bm A_m, \bm A_1', \ldots, \bm A_n'$. Furthermore, the underlying $\mathcal C$-morphism is just a permutation of systems.
\end{theorem}

The proof is a straightforward application of the embedding $\bm A \otimes \bm B \hookrightarrow \bm A \parr \bm B$ and the special property of first order systems that $\bm A \otimes \bm B \cong \bm A \parr \bm B$. It is given explicitly in the appendix.

We will construct $\bm X \cap \bm X'$ essentially in terms of a set-theoretic intersection of their associated states $c_{\bm X}$ and $c_{\bm X'}$. For that, we we make use of the following lemma.

\begin{lemma}\label{lem:intersect}
Let $c$ and $d$ be sets of states for the same object $A$ which are flat, closed, and furthermore satisfy the property that $\lambda \mmix \in c$ and $\lambda \mmix \in d$ for a fixed scalar $\lambda$. Then $c \cap d$ is also flat and closed.
\end{lemma}
\begin{proof}
For both properties, we rely on the fact that $(-)^*$ is order-reversing. That is, $a \subseteq b \Rightarrow b^* \subseteq a^*$. For flatness, we have by assumption that $\lambda \mmix \in c \cap d$. Since $c \cap d \subseteq c$, we have $c^* \subseteq (c \cap d)^*$. So by flatness of $c$, $\mu \disc \in c^*$ for some $\mu$. Hence $\mu \disc \in (c \cap d)^*$ and $c \cap d$ is flat.

For closure, first note that any set of states is contained in its double dual, so we have $c \cap d \subseteq (c \cap d)^{**}$.
For the converse, $c \cap d \subseteq c$ implies $c^* \subseteq (c \cap d)^*$ and similarly $d^* \subseteq (c \cap d)^*$. Hence, $c^* \cup d^* \subseteq (c \cap d)^*$, so $(c \cap d)^{**} \subseteq (c^* \cup d^*)^*$. It therefore suffices to show that $(c^* \cup d^*)^* \subseteq c \cap d$. This follows from the fact that $c^* \subseteq c^* \cup d^*$, so $(c^* \cup d^*)^* \subseteq c^{**} = c$ and similarly $(c^* \cup d^*)^* \subseteq d^{**} = d$.
\end{proof}



\begin{theorem}\label{thm:pullback}
Let $\bm X, \bm X'$ be a pair of objects with canonical embeddings $e, f$ into a fixed system $\bm Y := \bm A_1 \otimes \ldots \otimes \bm A_m \limp \bm A_1' \otimes \ldots \otimes \bm A_n'$,
as in Theorem~\ref{thm:fo-embed}. Then there exists an object $\bm X \cap \bm X'$ and morphisms $p_1, p_2$ in $\Caus[\mathcal C]$ making the following pullback:
\begin{equation}\label{eq:pullback}
\vcenter{\xymatrix{
\bm X \cap \bm X' \ar[d]^{p_2} \ar[r]^{p_1} & \bm X \ar[d]^{e} \\
\bm X' \ar[r]^{f}  & \bm Y
}}
\end{equation}
\end{theorem}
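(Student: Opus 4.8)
The plan is to realize the pullback as the intersection of the two causal structures sitting over the common underlying object of $\bm Y$. Recall from Theorem~\ref{thm:fo-embed} that the canonical embeddings $e$ and $f$ are carried by the identity on the underlying $\mathcal C$-object $Y$ of $\bm Y$, so that $\bm X = (Y, c_{\bm X})$ and $\bm X' = (Y, c_{\bm X'})$ for closed sets of states $c_{\bm X}, c_{\bm X'} \subseteq c_{\bm Y} \subseteq \mathcal C(I,Y)$, with $e$ and $f$ the evident inclusions. I would then define
\[
\bm X \cap \bm X' := (Y,\; c_{\bm X} \cap c_{\bm X'}),
\]
and take $p_1$ and $p_2$ to be the morphisms carried by $\mathrm{id}_Y$.

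The first and most delicate step is to check that $\bm X \cap \bm X'$ is a genuine object, i.e.\ that $c_{\bm X} \cap c_{\bm X'}$ is closed under double-dualisation. This is short once the right structural fact is isolated: the operation $c \mapsto c^\ast$ is an antitone polarity, since $c \subseteq d^\ast \iff d \subseteq c^\ast$, so $(-)^{\ast\ast}$ is a closure operator. Monotonicity of $(-)^{\ast\ast}$ gives $(c_{\bm X} \cap c_{\bm X'})^{\ast\ast} \subseteq c_{\bm X}^{\ast\ast} \cap c_{\bm X'}^{\ast\ast} = c_{\bm X} \cap c_{\bm X'}$, while extensivity gives the reverse inclusion; hence $c_{\bm X} \cap c_{\bm X'}$ is closed and $\bm X \cap \bm X'$ is an object. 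Since $c_{\bm X} \cap c_{\bm X'} \subseteq c_{\bm X}$ and $\subseteq c_{\bm X'}$, the identity on $Y$ is indeed a morphism $p_1 : \bm X \cap \bm X' \to \bm X$ and $p_2 : \bm X \cap \bm X' \to \bm X'$, and the square commutes because all four legs are carried by $\mathrm{id}_Y$.

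For the universal property I would use the defining feature of $\Caus[\mathcal C]$ that a morphism $(A,c_A) \to (B,c_B)$ is precisely a $\mathcal C$-morphism sending $c_A$ into $c_B$, and in particular is faithfully determined by its underlying $\mathcal C$-map. Given a test object $\bm Z$ with $g : \bm Z \to \bm X$ and $h : \bm Z \to \bm X'$ satisfying $e \circ g = f \circ h$, comparing underlying morphisms and using $\underline e = \underline f = \mathrm{id}_Y$ forces $\underline g = \underline h =: k : Z \to Y$. As $g$ and $h$ are morphisms, $k$ maps $c_{\bm Z}$ into both $c_{\bm X}$ and $c_{\bm X'}$, hence into $c_{\bm X} \cap c_{\bm X'}$, so $k$ underlies a morphism $u : \bm Z \to \bm X \cap \bm X'$. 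The equations $p_1 \circ u = g$ and $p_2 \circ u = h$ hold at the level of underlying maps (both sides reduce to $k$) and therefore in $\Caus[\mathcal C]$, and any mediating $u'$ must have underlying map $k$, giving uniqueness. The only genuinely load-bearing points are thus the reduction to a common underlying object supplied by Theorem~\ref{thm:fo-embed} and the closure of intersections of closed sets; once these are secured, the universal property is forced by the fact that morphisms are characterised through, and determined by, their underlying $\mathcal C$-morphisms.
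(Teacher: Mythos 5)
Your overall strategy---realise the pullback by intersecting the two causal structures transported onto the common underlying object $Y$, take the projections to be (carried by) the isomorphisms back to $\bm X$ and $\bm X'$, and derive the universal property from the fact that $\Caus[\mathcal C]$-morphisms are exactly the underlying $\mathcal C$-morphisms preserving the designated state sets---is the paper's strategy. One caveat is minor: the canonical embeddings $e,f$ of Theorem~\ref{thm:fo-embed} are not literally carried by $\mathrm{id}_Y$ but are permutations of tensor factors (the nesting of $\otimes$ and $\limp$ in $\bm X$ and $\bm X'$ need not agree), so one should intersect the pushed-forward sets $\{e\circ\rho \mid \rho\in c_{\bm X}\}$ and $\{f\circ\rho \mid \rho\in c_{\bm X'}\}$ and take $p_1 := e^{-1}$, $p_2 := f^{-1}$; your argument survives this adjustment essentially verbatim.

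The genuine gap is that an object of $\Caus[\mathcal C]$ must be \emph{flat} as well as closed, and you only verify closedness. Your closure-operator argument (monotonicity and extensivity of $(-)^{**}$ force $(c\cap d)^{**} = c\cap d$ for closed $c,d$) is correct and recovers the closedness half of the paper's Lemma~\ref{lem:intersect}. But flatness of the intersection is not automatic: it requires that $\lambda\mmix$ lie in \emph{both} sets for one and the \emph{same} invertible scalar $\lambda$, since only then does $\lambda\mmix$ survive into the intersection (the dual condition is easy, as $(c\cap d)^{*}\supseteq c^{*}$). This is exactly where the hypothesis that $\bm X$ and $\bm X'$ are built over the same first-order systems $\bm A_i, \bm A_j'$ does real work---it fixes a common normalisation $\lambda$---and it is the condition under which Lemma~\ref{lem:intersect} applies. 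Without this check you have not shown that $\bm X \cap \bm X'$ is an object of $\Caus[\mathcal C]$ at all, so the flatness verification needs to be added before the rest of your (otherwise correct) argument goes through.
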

\begin{proof}
Let $\bm Y := (Y, c_{\bm Y})$ and define the following two sets of states for $\bm Y$:
\begin{align*}
\overline{c}_{\bm X} & := \{ e \circ \rho \, | \, \rho \in c_{\bm X} \} \\
\overline{c}_{\bm X'} & := \{ f \circ \rho \, | \, \rho \in c_{\bm X'} \}
\end{align*}
Since $e$ and $f$ are just permutations of systems, it is straightforward to show that both of these sets are flat, closed, and  both contain $\lambda \mmix$ for some fixed $\lambda$. Hence, applying Lemma~\ref{lem:intersect}, we have that $\overline{c}_{\bm X} \cap \overline{c}_{\bm X'}$ is flat and closed. Then, let $\bm X \cap \bm X' := (Y, \overline{c}_{\bm X} \cap \overline{c}_{\bm X'})$, $p_1 := e^{-1}$ and $p_2 := f^{-1}$. It is straightforward to check that $p_1, p_2$ are indeed $\Caus[\mathcal C]$-morphisms and diagram \eqref{eq:pullback} clearly commutes.

It only remains to show that, for any $g: \bm Z \to \bm X$ and $h: \bm Z \to \bm X'$ such that $e \circ g = f \circ h$, there is a unique mediating morphism $z : \bm Z \to \bm X \cap \bm X'$:
\begin{equation*}
\xymatrix{
\bm Z \ar[rrd]^{g} \ar[rdd]_{h} \ar@{.>}[rd]|-{\,z\,} & & \\
& \bm X \cap \bm X' \ar[d]^{f^{-1}} \ar[r]_{e^{-1}} & \bm X \ar[d]^{e} \\
& \bm X' \ar[r]^{f}  & \bm Y
}
\end{equation*}
Since $e$ and $f$ are isomorphisms, the only possibility is $z := e \circ g = f \circ h$. So, it suffices to show that $z$ is a morphism in $\Caus[\mathcal C]$. For any $\rho \in c_{\bm Z}$, $g \circ \rho \in c_{\bm X}$, so $z \circ \rho = e \circ g \circ \rho \in \overline{c}_{\bm X}$. Similarly, $z \circ \rho = f \circ h \circ \rho \in \overline{c}_{\bm X'}$. Hence $z \circ \rho \in \overline{c}_{\bm X} \cap \overline{c}_{\bm X'}$, which completes the proof.
\end{proof}


Since combs embed into all causal processes, it immediately follows that we can take intersections of combs via pullback. Hence, the condition on $\Phi$ stated in \eqref{eq:totalise-order} can equivalently be given in terms of $\Phi$ having the following type:
\[
\tikzfig{big-phi-3} : (\bm A \limp (\bm A' \limp (\bm B \limp \bm B') \limp \bm C) \limp \bm C') \cap (\bm A \limp (\bm A' \limp (\bm C \limp \bm C') \limp \bm B) \limp \bm B')
\]
This generalises in the obvious way to any causal ordering $\mathcal O$.


The fact that $\cap$ arises as a pullback also gives us some properties of intersections `for free'. For instance, any functor with a left adjoint necessarily preserves limits. By the definition of $*$-autonomous categories, $(\bm B \limp -)$ has a left adjoint given by $(- \otimes \bm B)$, so the following is immediate:

\begin{corollary}
For objects $\bm A, \bm A'$ and $\bm B$ in \CausC we have
\[
\bm B \multimap (\bm A \cap \bm A') \cong (\bm B \limp \bm A) \cap (\bm B \limp \bm A')
\]
\end{corollary}


\section{Higher-order systems with indefinite causal order}\label{sec:indef}


Quantum theory, as it is typically understood, assumes a fixed background
time, and hence a fixed causal ordering. However, in the context of higher-
order processes, this restriction is not necessary to obtain a theory which
behaves \textit{locally} like classical or quantum theory. In this section we
shall take a look at \emph{process matrices}, introduced
in~\cite{ViennaIndef}, to investigate processes which do not have a definite
causal order. Such processes were called bipartite second-order causal
in~\cite{EPTCS236.6}.

\begin{definition}
A process $w : (A^{*} \otimes A') \otimes (B^{*} \otimes B') \to C^{*} \otimes C$
is called \emph{bipartite second-order causal} ($\textrm{SOC}_2$)
if for all causal $\Phi_{A}, \Phi_{B}$ the following map is causal:\vspace{-0pt}
  \ctikzfig{SOC2}\vspace{-0pt}
\end{definition}

So $\textrm{SOC}_2$ maps send products of causal processes to a causal process.
The following shows that $\textrm{SOC}_2$ processes are actually normalized on \emph{all}
non-signalling maps, not just product maps.

\begin{theorem}\label{thm:soc2-char}
For first order systems $\bm A, \bm A', \bm B, \bm B', \bm C, \bm C'$,
a process $w$ is $\textrm{SOC}_2$ if and only if it is of type 
$(\bm A \multimap \bm A') \otimes (\bm B \multimap \bm B') \multimap (\bm C \multimap \bm C')$.
\end{theorem}

\begin{proof}
Since products of causal processes are non-signalling, they are in $(\bm A \multimap \bm A') \otimes (\bm B \multimap \bm B')$, so any process of the above type is indeed $\textrm{SOC}_2$.

For the converse, let $\pi$ be an effect of type $(\bm C \multimap \bm C')^*$.
Then $\pi \circ w$ is an effect on products of causal processes.
Now Lemma~\ref{lem:tensor-effects} states that $\pi \circ w$ yields $1$ for product states if and only if it yields $1$ for any state in the tensor product. Hence it is an effect for $(\bm A \multimap \bm A') \otimes (\bm B \multimap \bm B')$,
which by Theorem~\ref{type of non-signalling} are precisely the non-signalling maps.
By Proposition~\ref{prop:morphism} this means $w: (\bm A \multimap \bm A') \otimes (\bm B \multimap \bm B') \multimap (\bm C \multimap \bm C')$
\end{proof}

This represents a significant strengthening of the result in~\cite{EPTCS236.6}, which was only able to show that $\textrm{SOC}_2$ extends to all so-called \textit{strongly non-signalling} processes, which are a special case of non-signalling processes.

Special cases of $\textrm{SOC}_2$ processes are 3-combs which arise from fixing a causal ordering between $\bm A$ and $\bm B$:\vspace{-0pt}
\begin{eqnarray*}
   \bm C \limp (\bm A \limp (\bm A' \limp \bm B) \limp \bm B') \limp \bm C'\\
   \bm C \limp (\bm B \limp (\bm B' \limp \bm A) \limp \bm A') \limp \bm C'\vspace{-0pt}
\end{eqnarray*}
Indeed one can show the containment of either of these types into the type of $\textrm{SOC}_2$ processes using a simple calculation on types much like in Remark~\ref{rem:type-embedding}. However, the most interesting $\textrm{SOC}_2$ processes are those which do not arise from combs.

\begin{example}
The \textit{OCB process} is defined as follows:\vspace{-0pt}
\ctikzfig{Wdef}\vspace{-0pt}
where $\sigma_x, \sigma_z$ are Pauli matrices and associated effects. Note that, while the individual summands are not positive, the result is, yielding a process in \CPM. The fact that it is an $\textrm{SOC}_2$ process in $\Caus[\CPM]$ follows straightforwardly from the fact that the Pauli matrices are trace-free. Furthermore, it was shown in~\cite{ViennaIndef} that it can be used to win certain non-local games with higher probability than any causally-ordered process, due to the fact that Bob can, to some extent, choose a causal ordering between himself and Alice \textit{a posteriori} by his choice of quantum measurement.
\end{example}

Theorem \ref{thm:soc2-char} extends naturally to a characterisation of $n$-partite second-order causal processes ($\textrm{SOC}_n$) via:\vspace{-0pt}
\[ (\bm A_1 \limp \bm A_1') \otimes \ldots \otimes (\bm A_n \limp \bm A_n') \limp (\bm C \limp \bm C') \vspace{-0pt}\]

\begin{example}
  It is not necessary to go to $\Caus[\CPM]$ to find processes exhibiting indefinite causal structure. Indeed the following process:
  \ctikzfig{BandW}
  is an $\textrm{SOC}_3$ process in $\Caus[\MatRp]$, where the `$-$' labelled states and effects
  are column vectors and row vectors with values $(1,-1)$ , respectively.
 It was shown in \cite{WolfClassicalMulti} that this process, as well as a generalisation to an $\textrm{SOC}_n$ process for odd $n$, was incompatible with any pre-defined causal order.
\end{example}

An interesting family of $\textrm{SOC}_2$ processes are \textit{switches}, where an auxiliary system is used to control the causal ordering of the input processes.

\begin{definition}
  For first-order systems $\bm X$ and $\bm A = \bm A' = \bm B = \bm B' = \bm C = \bm C'$, a \textit{switch} is a process of type:\vspace{-0pt}
  \begin{equation}\label{eq:switch}
    \tikzfig{switch}\ \ : \bm X \otimes \bm C \limp (\bm A \limp \bm A') \otimes (\bm B \limp \bm B') \limp \bm C'\vspace{-0pt}
  \end{equation}
  in \CausC, such that for distinct states $\rho_0, \rho_1 : \bm X$, we have:\vspace{-0pt}
  \begin{equation}\label{eq:switch-eqs}
    \tikzfig{switch0} \qquad \tikzfig{switch1}\vspace{-0pt}
  \end{equation}
\end{definition}


We now see some concrete examples of the switch, as a higher-order stochastic map and as a quantum channel.

\begin{example}\label{ex:class-switch}
  For $\mathcal C = \MatRp$, the \textit{classical switch} process is uniquely fixed by \eqref{eq:switch-eqs} if we let $\bm X = 2$ and:\vspace{-0pt}
  \[
  \rho_0 := \left(\begin{matrix}
    1 \\ 0
  \end{matrix}\right)
  \qquad
  \rho_1 := \left(\begin{matrix}
    0 \\ 1
  \end{matrix}\right) \vspace{-0pt}\]
  Indeed, $s$ is given by:\vspace{-0pt}
  \begin{equation}\label{eq:switch-class}
    \tikzfig{switch-class}\vspace{-0pt}
  \end{equation}
  where $\rho_i' := \rho_i^T$. Then, since $\rho_0' + \rho_1' = \disc$, we have:
  \ctikzfig{switch-class-caus}
  \ctikzfig{switch-class-caus2}\vspace{-0pt}
  Hence $s$ has the correct type shown in \eqref{eq:switch}.
\end{example}

\begin{example}
  For $\mathcal C = \CPM$, a switch process can be defined just as in \eqref{eq:switch-class}, by letting $\bm X = \mathcal L(\mathbb C^2)$ and replacing $\rho_i$ and $\rho_i^T$ with the appropriate qubit projections and their associated quantum effects:\vspace{-0pt}
  \[ \rho_i := \ketbra{i}{i} \qquad\qquad \rho_i'(\mu) := \textrm{Tr}(\ketbra{i}{i}\mu) \vspace{-0pt}\]
  This is precisely the $\mathcal Z$ superoperator defined in \cite{QSwitch}, which defines a (decoherent) switch for quantum channels.

  However, unlike Example \ref{ex:class-switch}, this channel is \textit{not} uniquely fixed by \eqref{eq:switch-eqs}, since $\rho_0, \rho_1$ do not form a basis for $\mathcal L(\mathbb C^2)$. 
  For instance, plugging $\rho := \ketbra{+}{+}$ into $\bm X$ of this process yields a classical mixture of the two possible wirings:
  \ctikzfig{switch-mix}
One can also define a \textit{coherent} quantum switch which does satisfy \eqref{eq:switch-eqs}, but where inputting the state $\ketbra{+}{+}$ into $\bm X$ yields a quantum superposition of causal orderings. See \cite{QSwitch} for details.
\end{example}

\begin{theorem}
  A switch cannot be causally ordered. That is, the type of $s$ does not restrict to one of the following:\vspace{-0pt}
  \begin{eqnarray*}
   \bm X \otimes \bm C \limp (\bm A \limp (\bm A' \limp \bm B) \limp \bm B') \limp \bm C'\\
  \bm X \otimes \bm C \limp (\bm B \limp (\bm B' \limp \bm A) \limp \bm A') \limp \bm C'\vspace{-0pt}
  \end{eqnarray*}
  unless $\bm A \cong \bm I$.
\end{theorem}

\begin{proof}
  Suppose $s$ is causally ordered with $\evt A \preceq \evt B$. That is:
  \[ s : \bm X \otimes \bm C \limp (\bm A \limp (\bm A' \limp \bm B) \limp \bm B') \limp \bm C'\vspace{-0pt} \]
  Plugging states and effects into $s$ yields a simpler type:
  \begin{equation}\label{eq:switch-plug}
    \tikzfig{switch-plug}\ \ :\ (\bm A \limp (\bm A' \limp \bm B) \limp \bm B')^*
  \end{equation}
  By Theorem \ref{n=2 non signalling} characterising one-way signalling processes, one can verify that for any $\Phi : \bm A \limp \bm B'$, we have:\vspace{-0pt}
  \[ \tikzfig{signal-single}\ \ :\ \bm A \limp (\bm A' \limp \bm B) \limp \bm B' \vspace{-0pt}\]
  Since this is the dual type of \eqref{eq:switch-plug}, composing the two yields\vspace{-0pt}
  \[ \tikzfig{switch-plug1}\ =\ \tikzfig{time-travel2} \vspace{-0pt}\]
  which violates no time-travel, Theorem \ref{thm:no-time-travel}. Hence $\bm A \cong \bm I$. The second causal ordering can be ruled out symmetrically, by plugging the state $\rho_0$ into $s$.
\end{proof}


\section{Conclusion and Future Work}


In order to study higher order processes, we have created a categorical construction
which sends certain compact closed categories $\mathcal C$ to a new category $\CausC$.
There is a fully faithful embedding of the category of first order causal processes of $\mathcal C$ into $\CausC$,
but we are also able to talk about genuine higher order causal processes.
This new category also has a richer structure which allows us to develop a type theory for its objects.
We classify certain kinds of processes in this type theory,
such as the non-signalling processes, one-way signalling processes, combs and bipartite second order causal processes and show that the type theoretic characterisation of these processes coincides with the operational one involving discarding.

The construction of \CausC can be generalised straightforwardly to encompass `sub-causal' processes as well, by replacing the definition of $(-)^*$ with:\vspace{-0pt}
\[ c^* = \{\pi:A^* \mid \forall x\in c, \pi \circ x \in \mathcal{M}\} \vspace{-0pt}\]
for a suitable sub-monoid $\mathcal M$ of $\mathcal C(I,I)$ to get $\Caus_{\mathcal M}[\mathcal C]$.
Then, we recover \CausC as $\Caus_{\{1\}}[\mathcal C]$. However, we obtain other interesting examples by varying $\mathcal M$. In the case of $\MatRp$, $\mathcal I = \mathbb R_+$. Taking $\mathcal M$ to be the unit interval, $\Caus_{[0,1]}[\MatRp]$ gives us the full subcategory of probabilistic coherence spaces on finite sets~\cite{DanosCoherence}. Similarly, $\Caus_{[0,1]}[\CPM]$ has trace non-increasing CP maps as its first-order processes, and generalisations thereof at higher orders. Alternatively, we can build a category of `causal processes with failure' by letting $\mathcal M$ be $\{0,1\}$. Exploring the properties of these categories, and how they relate to \CausC is a subject of future work.
Another subject for future research is the relation between the types of causal systems and multiplicative linear logic (MLL).
Since $*$-autonomous categories are a model of MLL, MLL provides a (decidable) fragment of the logic of type containment in \CausC. This opens up possibilities to automate many proofs using existing automated linear logic provers. Indeed many of the type relationships in this paper were discovered with the help of such a tool, called \texttt{llprover} \cite{llprover} (see Fig.~\ref{fig:llprover}).

\begin{figure}
  \centering
  \includegraphics[width=\textwidth]{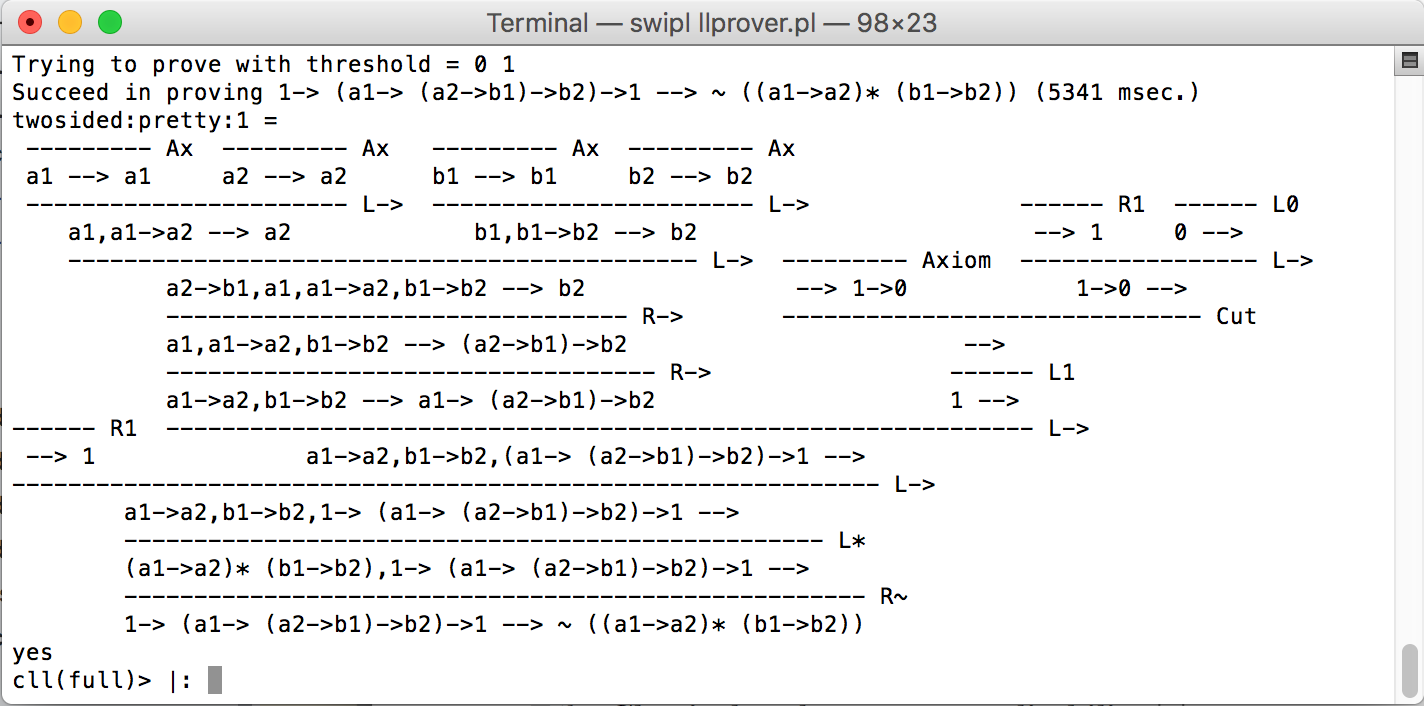}

  \caption{\label{fig:llprover} Proving a 3-comb with trivial input and output systems embeds in $\SOCt$ using \texttt{llprover}.}
\end{figure}

A third direction for future work is to understand exactly how expressive the internal logic of a HOCC is. 
We showed in this paper that it is possible to express generalised conditional independences of a `non-signalling' variety, 
which always takes a certain form for a process with inputs and outputs (e.g., a conditional probability distribution):
\[ \textrm{(outputs)} \indep \textrm{(inputs)} \ |\ \textrm{(other inputs)} \]
For example, the usual 2-party non-signalling conditions for a process $\Phi : A \otimes B \to A' \otimes B'$ are of the form:
\[
A' \indep B \ |\ A
\qquad\textrm{and}\qquad
B' \indep A \ |\ B 
\]
However, classical conditional indepedences can be of the form $\mathcal A \indep \mathcal B \ |\  \mathcal C$ for arbitrary sets of random variables $\mathcal A, \mathcal B, \mathcal C$. It is therefore natural to ask if we could express such indepedences within a HOCC, possibly with some extra structure. One possible approach is to inter-convert inputs and outputs via the probabilistic operation of \textit{disintegration}. This classical technique of converting a joint state into a reduced state and a conditional distribution is a crucial `subroutine' in Bayesian inference, and has recently given a categorical/string-diagrammatic characterisation by Cho and Jacobs~\cite{ChoJacobsDisint} and showed that categories that admit disintegration enable various equivalent expressions of (arbitrary) conditional independences of systems, and can even recover the graphoid axioms of Verma and Pearl~\cite{VermaPearl}. Thus, considering precausal categories with disintegration should enable one to say much more about conditional independences (and processes which preserve them) than just plain precausal categories. Unfortunately, disintegration in its usual form relies crucially on the copiability of classical data, so it does not have a straightforward quantum analogue. Nonetheless, quantum analogues to Bayesian conditioning~\cite{LeiferSpekkens} and more recently quantum common causes~\cite{AllenCommonCause} could provide a solution analogous to the classical case.

Pushing this a bit further, one not only wishes to talk about independences within a HOCC, but also draw causal conclusions. That is, one not only wishes to rule out possible causal explanations, but also one wants to say when a causal influence is present (and possibly even measure it). Exploring the connection with techniques in classical causal inference of a logical or equational flavour, such as Pearl's \textit{do-calculus}~\cite{PearlBook}, may give some clues.




\bibliographystyle{plain}
\bibliography{main}

\clearpage

\appendix

\section{Omitted proofs}

\subsection{Proof that \CausC is $*$-autonomous}\ \\

\noindent In this part of the appendix we will give full proofs leading up to the fact that \CausC is indeed a $*$-autonomous category.

\begin{theorem*}
  $\CausC$ is an SMC, with tensor given by:
  \[ \bm A \otimes \bm B := (A \otimes B, c_{\bm A \otimes \bm B}) \]
  and tensor unit $\bm I := (I, \{1\})$.
\end{theorem*}
\begin{proof}
  First we show that $\bm A \otimes \bm B$ and $\bm I$ are indeed objects in \CausC, namely the $c_{\bm A \otimes \bm B}$ and $c_{\bm I}$ are flat and closed. This is immediate for $c_{\bm I}$, so we focus on $c_{\bm A \otimes \bm B}$.

  Closure follows immediately from Lemma \ref{lem:triple-dual}, so it remains to show flatness. Since $c_{\bm A}$ and $c_{\bm B}$ are flat, then for some $\lambda, \lambda'$:
  \[ \lambda\ \maxmix \in c_{\bm A},\  \lambda'\ \maxmix \in c_{\bm B} \]
  hence:
  \[ \lambda\lambda'\ \maxmix\,\maxmix \in c_{\bm A} \otimes c_{\bm B} \subseteq c_{\bm A \otimes \bm B} \]
  Similarly, for some $\mu, \mu'$:
  \[ \mu\ \discard \in c_{\bm A}^*,\ \mu'\ \discard \in c_{\bm B}^* \]
  So, for all $\rho \in c_{\bm A}$, $\rho' \in c_{\bm B}$, we have:
  \[ \mu\mu'\ \weight{\rho}\, \weight{\rho'} = 1 \]
  which implies, by Lemma \ref{lem:triple-dual}:
  \[ \mu\mu' \discard \, \discard \in (c_{\bm A} \otimes c_{\bm B})^* =
   (c_{\bm A} \otimes c_{\bm B})^{***} =: c_{\bm A \otimes \bm B}^*\]
  Next, we show associativity and unit laws for $\otimes$. For any object $\bm A$, the unit laws $\bm A \otimes \bm I = \bm A = \bm I \otimes \bm A$ follow from the closure of $c_{\bm A}$. Associativity is a bit trickier. We first work in terms of effects in order to take advantage of Lemma \ref{lem:tensor-effects}. Applying this to an effect $\pi \in c_{(\bm A \otimes \bm B) \otimes \bm C}^*$ gives:
  \begin{align*}
  \quant{\forall\ \Psi \in c_{(\bm A \otimes \bm B) \otimes \bm C}}{\tikzfig{tri-braket1}\ =\ 1}
  & \ \ \iff\ \ 
  \quant{\forall\ \Psi \in c_{\bm A \otimes \bm B}, \xi \in c_{\bm C}}{\tikzfig{tri-braket2}\ =\ 1} \\[3mm]
  & \ \ \iff\ \ 
  \quant{\forall\ \psi \in c_{\bm A}, \phi \in c_{\bm B}, \xi \in c_{\bm C}}{\tikzfig{tri-braket3}\ =\ 1}
  \end{align*}
  Similarly, for $\pi \in c_{\bm A \otimes (\bm B \otimes \bm C)}^*$
  \begin{align*}
  \quant{\forall\ \Psi \in c_{\bm A \otimes (\bm B \otimes \bm C)}}{\tikzfig{tri-braket1}\ =\ 1}
  & \ \ \iff\ \ 
  \quant{\forall\ \psi \in c_{\bm A}, \Phi \in c_{\bm B \otimes \bm C}}{\tikzfig{tri-braket2p}\ =\ 1}  \\[3mm]
  & \ \ \iff\ \ 
  \quant{\forall\ \psi \in c_{\bm A}, \phi \in c_{\bm B}, \xi \in c_{\bm C}}{\tikzfig{tri-braket3}\ =\ 1} 
  \end{align*}
  Hence $c_{(\bm A \otimes \bm B) \otimes \bm C}^* = c_{\bm A \otimes (\bm B \otimes \bm C)}^*$ so $c_{(\bm A \otimes \bm B) \otimes \bm C} = c_{\bm A \otimes (\bm B \otimes \bm C)}$, which implies associativity of $\otimes$.

  Next we show that $\otimes$ is well-defined on morphisms. For morphisms $f : \bm A \to \bm A'$, $g : \bm B \to \bm B'$, and an effect $\pi \in c_{\bm A' \otimes \bm B'}^*$, we have by Lemma \ref{lem:tensor-effects}:
  \[
    \quant{\forall\ \Psi \in c_{\bm A \otimes \bm B}}{\tikzfig{bi-braket-fg1}\ =\ 1}
    \ \ \iff\ \ 
    \quant{\forall\ \psi \in c_{\bm A}, \phi \in c_{\bm B}}{\tikzfig{bi-braket-fg2}\ =\ 1} 
  \]
  The RHS holds since $f \circ \psi \in c_{\bm A'}$ and $g \circ \phi \in c_{\bm B'}$. From the LHS above, we can conclude that $f \otimes g : \bm A \otimes \bm B \to \bm A' \otimes \bm B'$ is a morphism in \CausC.

  Finally, it remains to show that the swap is a morphism in \CausC. By Proposition \ref{prop:morphism}, this is the case when, for all $\pi \in c_{\bm B \otimes \bm A}^*$, we have:
  \[ \tikzfig{swap-effect}\ \in\ c_{\bm A \otimes \bm B}^* \]
  This again follows by relying on Lemma \ref{lem:tensor-effects}.
\end{proof}

\begin{lemma*}
  The transposition functor $(-)^* : \mathcal C^{\textrm{op}} \to \mathcal C$:
  \begin{equation}\label{eq:trans-functor-app}
    A\mapsto A^* \qquad\qquad \tikzfig{transpose}
  \end{equation}
  lifts to a full and faithful functor $(-)^* : \CausC^{\textrm{op}} \to \CausC$, where $\bm A^* := (A^*, c_{\bm A^*})$.
\end{lemma*}
\begin{proof}
  Note that $c_{\bm B}^* = c_{\bm B^*}$, by definition, and $c_{\bm A} = c_{\bm A}^{**} = (c_{\bm A^*})^*$. Hence, for $f : \bm A \to \bm B$ we have:
  \[ \quant{\ \forall\ \rho \in c_{\bm A}, \pi \in c_{\bm B}^*\ }{\tikzfig{dual-f1}\ =\ 1}
    \ \ \iff\ \ 
    \quant{\ \forall\ \pi \in c_{\bm B^*}, \rho \in (c_{\bm A^*})^*\ }{\tikzfig{dual-f2}\ =\ 1}  \]
  so $f^* : \bm B^* \to \bm A^*$ is a morphism in \CausC.
  Just as with the functor $(-)^*$ in $\mathcal C$, $((-)^*)^* = \textrm{Id}_{\CausC}$, so fullness and faithfulness is immediate.
\end{proof}

\begin{theorem*}
  For any precausal category $\mathcal C$, \CausC is a $*$-autonomous category where $\bm I = \bm I^*$.
\end{theorem*}
\begin{proof}
  We have already shown that \CausC is an SMC (Theorem~\ref{CausCisSMC}) with a full and faithful functor
   $(-)^* : \CausC^{\textrm{op}} \to \CausC$ (Lemma~\ref{lem:star}).
    Consider objects $\bm A, \bm B, \bm C$ in \CausC. The underlying object of $\bm B \limp \bm C$ is:
  \[ (B \otimes C^*)^* = B^{*} \otimes C^{**} = B^* \otimes C \]
  Since $\mathcal C$ is compact closed, there is a natural isomorphism:
  \[  \mathcal C(A \otimes B, C) \cong \mathcal C(A, B^* \otimes C) \]
  given by:
  \[ \tikzfig{fABtoC} \quad \mapsto \quad
     \tikzfig{fAtoBstarC}\ \ =:\ \ \tikzfig{gAtoBstarC} \]
  whose inverse is:
  \[ \tikzfig{gAtoBstarC} \quad \mapsto \quad
     \tikzfig{gABtoC} =\ \ \tikzfig{fABtoC} \]
  Indeed this is how one shows that compact closed categories are in fact closed. 
Thus, it suffices to show that:
  \[ f \in \CausC(\bm A \otimes \bm B, \bm C) \iff g \in \CausC(\bm A, \bm B \limp \bm C) \]
  This follows from Lemma \eqref{lem:tensor-effects}:
\begin{align*}
    f \in \CausC(\bm A \otimes \bm B, \bm C)
    &\ \iff \ 
    \quant{\forall \rho \in c_{\bm A \otimes \bm B}, \pi \in c_{\bm C}^*}
    {\tikzfig{fABtoC-sandwich1}\ =\ 1} \\[3mm]
    & \iff \ \quant{\forall \rho_1 \in c_{\bm A}, \rho_2 \in c_{\bm B}, \pi \in c_{\bm C}^*}
    {\tikzfig{fABtoC-sandwich2}\ =\ 1} \\[3mm]
    &\ \iff \ 
    \quant{\forall \rho_1 \in c_{\bm A}, \rho_2 \in c_{\bm B^*}^*, \pi \in c_{\bm C}^*}
    {\tikzfig{fABtoC-sandwich3}\ =\ 1} \\[3mm]
    &\  \iff \ 
    g \in \CausC(\bm A, \bm B \limp \bm C)
  \end{align*}
  Finally, $\bm I = \bm I^*$ follows from the fact that $I = I^*$ and
  \[
    c^*_{\bm I} = \{\lambda \mid 1 \lambda = 1\} = \{1\} = c_{\bm I}
    \tag*{\qedhere}
  \]
\end{proof}

\subsection{Proofs that \MatRp and \CPM are precausal}\label{sec:mat-cpm-precausal}

\begin{theorem*}
  \MatRp is a precausal category.
\end{theorem*}
\begin{proof}
  \Ca was given in Example \ref{ex:mat-causality}. \Cb is immediate, and \Cc follows from the fact that one can always construct a basis for a vector space out of probability distributions, e.g., by taking the point distributions. To show \Cd, we will decompose into \Ce and \Cf via Proposition~\ref{4to45}.

  So, we turn to \Ce:
  \[ \quant{\exists\ \Phi'\ \textit{causal}}{\tikzfig{one-way-sig}}
     \implies
     \quant{\exists\ \Phi_1,\Phi_2\ \textit{causal}}{\tikzfig{one-way-sig2}}
  \]
  In terms of a conditional probability distribution $P(A' B' | A B)$, the premise above amounts to the usual non-signalling condition:
  \[ P(A' | A B) = P(A' | A) \]
  Hence the conclusion follows from the product rule:
  \begin{align*}
    P(A' B'|A B) & = P(A' | A B) P(B' | A' A B) \\
    &  = P(A' | A) P(B' | A' A B)
  \end{align*}
  More precisely, suppose $\Phi_{ij}^{kl}$ is a stochastic matrix such that there exists another stochastic matrix $(\Phi')_i^k$ where:
  \[ \sum_l \Phi_{ij}^{kl} = (\Phi')_i^k \]
  Then, let:
  \begin{align*}
    (\Phi_1)_{i}^{ki'k'} & = (\Phi')_i^k \delta_{ii'} \delta_{kk'} \\
    (\Phi_2)_{i'k'j}^l & = \begin{cases}
      \delta_{0l} & \textrm{ if } (\Phi')_i^{k'} = 0 \\
      \Phi_{ij}^{k'l}/(\Phi')_i^{k'} & \textrm{ otherwise}
    \end{cases}
  \end{align*}
  where $\delta_{ij}$ is the Kronecker delta. One can straightforwardly verify that these are both stochastic matrices. Let $\Psi_{ij}^{kl}$ be the result of plugging outputs $i',k'$ of $\Phi_1$ into those inputs for $\Phi_2$, i.e.
  \[
  \Psi_{ij}^{kl}
  := \sum_{i'k'} (\Phi_1)_{i}^{ki'k'} (\Phi_2)_{i'k'j}^l
  = (\Phi')_i^k (\Phi_2)_{ikj}^l
  \]
  If $(\Phi')_i^k = 0$, then both $\Phi_{ij}^{kl}$ and $\Psi_{ij}^{kl}$ are $0$ for all $j, l$. So, suppose $(\Phi')_i^k \neq 0$. Then:
  \[ \Psi_{ij}^{kl} 
  = (\Phi')_i^k (\Phi_{ij}^{kl}/(\Phi')_i^k)
  = \Phi_{ij}^{kl} \]

  For \Cf, let $w_j^i$ be the matrix of a second-order causal effect $w : A \otimes B^*$. Then for all stochastic matrices $\Phi_i^j$, we have:
  \[ \sum_{ij} w_j^i \Phi_i^j = 1 \]
  For some fixed column $m$, and fixed rows $n \neq n'$, the following matrix:
  \[
  1 = \Phi_i^j = \begin{cases}
    p & i = m, j = n \\
    1 - p & i = m, j = n' \\
    0 & i = m, j \neq n, j \neq n' \\
    \delta_i^j & i \neq m
  \end{cases}
  \]
  defines a stochastic map for any $p \in [0,1]$. Then:
  \[ \sum_{ij} w_j^i \Phi_i^j = p w_n^m + (1-p) w_{n'}^m + K = 1 \]
  where $K$ doesn't depend on $p$. Since we can freely vary $p$ between $0$ and $1$, the only way to preserve normalisation is if $w_n^m = w_{n'}^m$. Hence, for all $j$, we have $w_j^i = w_0^i$. Defining $\rho^i := w_0^i$ gives factorisation \Cf.
\end{proof}

\begin{theorem*}
The category \CPM is a precausal category.
\end{theorem*}
\begin{proof}
As noted in Example~\ref{ex:cpm-causality}, discarding processes satisfying \Ca are given by the trace. For \Cb, $d_{\mathcal L(H)} = \textrm{dim}(H)^2$, which is invertible whenever $\textrm{dim}(H) \neq 0$. \Cc follows from the fact that for any finite-dimensional Hilbert space $H$, we can find a basis of positive operators spanning $\mathcal L(H)$. Hence, by renormalising, we can also find a basis of trace-$1$ positive operators.

For \Cd, we shall show \Ce and \Cf.

Condition \Ce states that
$$tr_{B'}(\Phi) = \Phi' \implies \Phi = (1_{A'} \otimes \Phi_2) \circ (\Phi_1 \otimes 1_B)$$
This is precisely the result of \cite{Eggeling} and is based on the fact that minimal Stinespring dilations are related by a unitary.


For \Cf, a causal map $\Phi : A\to B$ in \CPM is a completely positive trace preserving map.
Such a map can always be written as 
\ctikzfig{CPTP-in-basis}
where the states and effects labeled with $i$ ($j$) form a basis for $B$ ($A^*$), with the 0-th basis element the maximally mixed state (discard effect).
Now if any second order causal effect $w$ does not split as in \Cf, we can always change the value of some of the $r_{i,j}$ such that $\Phi$ is still positive, but $w(\Phi) \neq 1$.
\end{proof}

\subsection{Proofs that \Rel is only weakly pre-causal}

\begin{theorem*}
The category \Rel satisfies axioms \Ca-\Cc and \Ce.
\end{theorem*}

\begin{proof}
  It will be conventient to write relations as (possibly infinite) matrices taking entries the booleans. That is, a relation $f \subseteq A \times B$ is equivalently represented as the following boolean matrix:
  \[ \left\{ f_i^j \in \mathbb B | i \in A, j \in B \right\} \]
  Relation composition then becomes:
  \[ h_i^k := \sum_j f_i^j g_j^k \]
  where we interpret multiplication as meet and (possibly infinite) summation as join.

  For \Ca, the discarding map for any system $A$ is the relation which relates every $i \in A$ to $* \in I := \{*\}$. That is:
  \[ \discard_i = 1 \]
  This clearly respects the monoidal structure. Using this definition of discarding, causal relations are those $f$ such that for all $i \in A$ there exists $j \in B$ such that $f_i^j = 1$.

  The empty set is the zero object in \Rel. The definition of discarding yields that $d_A = 1$ for all non-empty sets $A$, hence \Cb is satisfied.

  \Cc is immediate consequence of the fact that all singletons $\{i\} \subseteq A$ are causal states in \Rel.

  \Ce can be proven in almost the same way as for \MatRp. That only difference is we no longer need to use division in the definiton of $\Phi_2$, because $0$ and $1$ are the only possible values that $(\Phi')_i^{k'}$ can take:
  \begin{align*}
    (\Phi_1)_{i}^{ki'k'} & = (\Phi')_i^k \delta_{ii'} \delta_{kk'} \\
    (\Phi_2)_{i'k'j}^l & = \begin{cases}
      \delta_{0l} & \textrm{ if } (\Phi')_i^{k'} = 0 \\
      \Phi_{ij}^{k'l} & \textrm{ if } (\Phi')_i^{k'} = 1
    \end{cases}
  \end{align*}
  These are causal, and by case distinction on $(\Phi')_i^{k'} \in \{0, 1\}$, we can see that:
  \[
    \sum_{i'k'} (\Phi_1)_{i}^{ki'k'} (\Phi_2)_{i'k'j}^l = \Phi_{ij}^{kl}
    \tag*{\qedhere}
  \]
\end{proof}

\begin{theorem*}
The category \Rel does not satify \Cf.
\end{theorem*}

\begin{proof}
  Causality means for all $i$, there exists $j$ such that $f_i^j = 1$. The dual condition for causality is obtained by reversing the quantifies. That is, for some relation $w$:
  \begin{equation}\label{eq:rel-soc}
    f \ \ \textit{causal} \ \ \implies \ \ \sum_{ij} f_i^j w_j^i = 1
  \end{equation}
  if and only if there exists some $j$ such that for all $i$, $w_j^i = 1$.

  \Cf requires that any such $w$ must be independent of $i$. That is, $w_j^i = \rho^i$. However, $w \subseteq \mathbb B \times \mathbb B$ defined by $w_j^i := \neg(i \wedge j)$ satisfies \eqref{eq:rel-soc}, but is not independent of $i$.
\end{proof}

\subsection{Proof of embedding into the type of all causal processes}

\begin{theorem*}[\ref{thm:fo-embed}]
  Any object $\bm X$ built inductively from first order systems $\bm A_1,
  \ldots, \bm A_m$, $\bm A_1', \ldots, \bm A_n'$ has a canonical embedding of the form:
  \[ e : \bm X \rightarrow (\bm A_1 \otimes \ldots \otimes \bm A_m \limp \bm A_1' \otimes \ldots \otimes \bm A_n') \]
  whose underlying $\mathcal C$-morphism is just a permutation of systems.
\end{theorem*}

\begin{proof}
  Given $\bm X$ built inductively from first-order types via the $*$-autonomous structure, we can replace $\bm A \limp \bm B$ with $\bm A^* \parr \bm B$ and push the $(-)^*$ inside as far as possible via application of the following isomorphisms from left-to-right:
  \[
  (\bm A \otimes \bm B)^* \cong \bm A^* \parr \bm B^*
  \qquad \textrm \qquad
  (\bm A \parr \bm B)^* \cong \bm A^* \otimes \bm B^*
  \]
  We can then apply $\bm A^{**} \cong \bm A$ to reduce $\bm X$ to an expression consisting of either first-order types or their duals, combined with $\otimes$ and $\parr$.
  For example, if $\bm X := \bm A_1 \limp (\bm A_1' \limp \bm A_2) \limp \bm A_2'$, we have:
  \begin{align*}
    \bm X & := \bm A_1 \limp (\bm A_1' \limp \bm A_2) \limp \bm A_2' \\
    & \cong \bm A_1^* \parr ((\bm A_1')^* \parr \bm A_2)^* \parr \bm A_2' \\
    & \cong \bm A_1^* \parr ((\bm A_1')^{**} \otimes \bm A_2^*) \parr \bm A_2' \\
    & \cong \bm A_1^* \parr (\bm A_1' \otimes \bm A_2^*) \parr \bm A_2'
  \end{align*}
  It was then noted in section \ref{sec:causC} that there is a canonical embedding $\bm A \otimes \bm B \hookrightarrow \bm A \parr \bm B$. We then apply this embedding to eliminate all $\otimes$'s, and then apply a permutation in $\mathcal C$ to sort all of the dual first-order systems to the left. Continuing the example, we have:
  \begin{align*}
    \bm X & \cong \bm A_1^* \parr (\bm A_1' \otimes \bm A_2^*) \parr \bm A_2' \\
    & \hookrightarrow \bm A_1^* \parr \bm A_1' \parr \bm A_2^* \parr \bm A_2' \\
    & \overset{(\dagger)}{\cong} \bm A_1^* \parr \bm A_2^* \parr \bm A_1' \parr \bm A_2'
  \end{align*}
  We can then pull the $(-)^*$ out as far as possible and re-introduce the $\limp$. Then, by Corollary~\ref{cor:fo-tensor-is-par}, we can always replace a $\parr$ of first-order systems with a $\otimes$, which completes the embedding. Applying these steps to the running example yields:
  \begin{align*}
    \bm X & \cong \bm A_1^* \parr \bm A_2^* \parr \bm A_1' \parr \bm A_2' \\
    & \cong (\bm A_1 \otimes \bm A_2)^* \parr \bm A_1' \parr \bm A_2' \\
    & \cong \bm A_1 \otimes \bm A_2 \limp \bm A_1' \parr \bm A_2' \\
    & \cong \bm A_1 \otimes \bm A_2 \limp \bm A_1' \otimes \bm A_2'
  \end{align*}
  We complete the proof by noting that the only step not arising from an identity morphism in $\mathcal C$ is the step marked $(\dagger)$ above, which is a permutation.
\end{proof}

\end{document}